\definecolor{darkgreen}{rgb}{0,0.3,0}
\newcommand{\pr}[1]{\mathbb{P}\hspace{-0.1em}\left(#1\right)}
\newcommand{\cPr}[2]{\mathbb{P}\hspace{-0.6mm}\left[\left.#1\,\right|#2\right]}
\newcommand{\E}[1]{\mathbb{E}\set{#1}}
\newcommand{\set}[1]{\left\{{#1}\right\}}
\newcommand{\eset}[2]{\left\{{#1} : \: {#2}\right\}}
\newcommand{\indic}[1]{1_{\{#1\}}}
\newcommand{\Nat}{\mathbb{N}}
\renewcommand{\Re}{\mathbb{R}}
\newcommand{\D}{{\,\mathrm{d}}}
\newcommand{\bmat}[1]{ \begin{bmatrix} #1 \end{bmatrix}}
\newcommand{\smat}[1]{ \left[\begin{smallmatrix} #1 \end{smallmatrix}\right]}
\renewcommand{\vec}[1]{{\boldsymbol{#1}}}
\newcommand{\mat}[1]{{\boldsymbol{#1}}}
\newcommand{\itext}[1]{{\qquad\text{#1}\qquad}}
\newcommand{\norm}[1]{\Vert #1 \Vert}
\newcommand{\abs}[1]{\left| #1 \right|}
\def\f{\vec{f}}
\def\g{\vec{g}} 
\def\h{\vec{h}} 
\def\p{\vec{p}} 
\def\q{\vec{q}} 
\def\u{\vec{u}} 
\def\v{\vec{v}}
\def\x{\vec{x}}
\def\A{\mat{A}} 
\def\V{\mat {V}}
\newtheorem{proposition}{Proposition}[section]
\newtheorem{conjecture}[proposition]{Conjecture}
\newtheorem{lemma}[proposition]{Lemma}
\newenvironment{proof}{{\it
Proof}\quad}{\nopagebreak\hspace*{\fill}$\square$\par
\bigbreak}
\begin{document}

\title{On the stability of two-chunk file-sharing systems}
\author{Ilkka Norros and Hannu Reittu\\
VTT Technical Research Centre of Finland\and Timo Eirola\\
Helsinki University of Technology}
\date{}

\maketitle

\begin{abstract}
We consider five different peer-to-peer file sharing systems with two
chunks, with the aim of finding chunk selection algorithms that have
provably stable performance with any input rate and assuming
non-altruistic peers who leave the system immediately after
downloading the second chunk. We show that many algorithms that first
looked promising lead to unstable or oscillating behavior. However, we
end up with a system with desirable properties. Most of our rigorous
results concern the corresponding deterministic large system
limits, but in two simplest cases we provide proofs for the stochastic
systems also. 
\end{abstract}


\section{Introduction}\label{intro} 

We consider an open network, with constant rate of incoming 'peers'. A
peer is assumed to be able to contact and communicate with any other
peer in the system (technically, this can be realised by an overlay
network built upon the Internet, where the knowledge of a peer's
IP-address enables communication with it). We assume that one special,
persistent peer, the 'seed', holds a file and wishes to distribute it
to all the others. The most effective way of doing this, in particular
when the number of peers is very large, is that as soon as a peer
receives the file, it becomes a seed itself. The number of copies of
the file then grow exponentially. To enhance performance, the file is
divided into small chunks that are spread in similar fashion so that
parts of the file may start to be multiplied before the original seed
has even once uploaded the whole file. This technique was introduced
by B.\ Cohen with his BitTorrent \cite{bittorrent} protocol. It became
soon the dominant principle of sharing large files (e.g., movies) with
peer-to-peer networking.

Moreover, the peers can be 'non-altruistic' in the sense that they
leave the system immediately having downloaded the whole file, without
necessarily slowing the system performance. It is remarkable that if
there are more than one chunks, it seems that {\em any} arrival rate
of new peers can still be sustained. However, as we shall see, some
additional algorithms are then needed for stable performance. This
paper analyses several such algorithms in the simplest relevant case
of two chunks. In real systems the number of chunks is large, e.g.\ of
the order $10^3$. However, even the case of two chunks, considered
here, is highly non-trivial. In fact, it may also be the hardest case
as regards stability. For {\em proving} stability it is anyway the
easiest case, and hopefully the solutions for the two chunk case
appear useful in more general models as well.

We work on a fully distributed scenario, relying on randomness: each
peer contacts another, uniformly randomly chosen peer, according to a
standard Poisson process, and gets to know what chunks the latter
possesses. What follows, depends on the particular
algorithm. Massoulie and Vojnovic \cite{massoulievojnovic05} were the
first to propose this model and to obtain rigorous mathematical
results on it. They allowed an arbitrary number of chunks and analysed
the corresponding deterministic large system limit (see Section
\ref{prelisec}) with the following remarkable result: if each new peer
arriving in the system obtains a roughly uniformly random chunk, the
system is stable even if all the remaining chunks are downloaded
randomly. (In the special case of two chunks, it is noted in
\cite{massoulievojnovic05} that the scheme does not give a unique rest
point in the large system limit, but for any larger number of chunks
it does.) However, the results of \cite{massoulievojnovic05} prompt
for further research in at least two directions. First, the scheme where
the seed gives a uniformly distributed chunk to every new peer makes
the seed a potential bottleneck --- thus, this algorithm is not as
fully distributed as it could be. Second, the stability of the large
system limit does not automatically guarantee the stability of the
original random system.

We focus on entirely distributed solutions, where also the first chunk
must be found randomly from the peer population, and assume
'non-altruistic' peers. In our first paper \cite{norprarei06} we
considered the so-called flash-crowd scenario where a large number of
peers arrive simultaneously but none afterwards. It was noticed that
the first phase of the copying process is asymptotically (with
increasing number of peers) equivalent to P\'{o}lya's urn model, which
is well known to converge to a random proportion of each chunk in the
system. This imbalance leads to the 'rare chunk phenomenon': one of
the chunks is not able to become common and as a result forms a
bottleneck of performance (see also \cite{reittunorrosspaswin07}). In
an open variant of this setup, with continuously incoming peers, this
could lead to instability: the number of peers in the system could
grow unboundedly, since more and more peers would be searching for the
one rare chunk. (BitTorrent counteracts to the rare chunk phenomenon
by its 'Rarest First' principle, and our last two algorithms can be
seen as distributed ways to implement this principle using very
coarse rarity estimates.) The central question of this paper is: is it
possible to avoid the severely imbalanced chunk distribution, implying
instability, without using centralized coordination of downloads?

One source of ideas for this is provided by the wide literature on urn
models (originally often related to physics; for a recent review, see
\cite{pemantle07}). For example, Ehrenfests' urn model gives an almost
ideal balance in a closed system. Still more relevant is the so-called
Friedman urn, with an analogous result for an open system, with a flow
of incoming particles. We showed in \cite{norprarei06} that if an
empty node first contacts a node having chunk $0$ ($1$) but then
downloads the opposite chunk $1$ ($0$) first (neglecting for a while
the question how a peer with that chunk could be found), the
distribution of chunks converges almost surely to $\frac12-\frac12$ as
the number of peers goes to infinity.

In this paper, we analyse five two-chunk models: (i) the Plain Random
Contact system, which is found to be very unstable; (ii) the
Deterministic First Chunk system, proposed in \cite{norprarei06},
but found unstable in the present scenario; (iii) the ideal Friedman system
(non-implementable in a distributed way in our scenario), which is
proven to be stable; and two distributed algorithms that try to
approximate the Friedman system: (iv) the Delayed Friedman system,
which may be stable but oscillates heavily, and, finally, (v) the Enforced Friedman
system which seems to provide the desired performance. 

The paper is structured as follows: general definitions and some
preliminaries are given in Section \ref{prelisec}, and the five models
are studied in Section \ref{modelsec}, each in its own
subsection. Some concluding remarks are made in Section \ref{conclsec}.

\section{Definitions and preliminaries}\label{prelisec}

We study time-homogeneous continuous time Markov processes
$S=(S_t)_{t\ge0}$ with state space $\Nat^d$, where $d$ is 2,3 or 4,
depending on the particular model. Denoting the $i$th unit vector by
$e_i$, $i=1,\ldots,d$, the transitions are always of one of the three
forms
$$
s\to s+e_i,\quad s\to s-e_i,\quad s\to s+e_i-e_j.
$$
Denote the transition intensity from state $\bm{m}$ to state $\bm{n}$
by $q(\bm{m},\bm{n})$. The process $S$ is thought as a model of a
queueing network, where the state component $i$ presents the number of
customers in network node $i$, $i=1,\ldots,d$.

The Markov process $S$ is called {\em stable}, if it is irreducible
and positively recurrent. This is equivalent to the existence of a
unique stationary probability measure. Assuming irreducibility and
finiteness of transition graph neighborhoods, stability is equivalent
to the existence of a finite set of states $C\subset\Nat^d$ such that
with any starting point $S_0$, the process reaches $C$ in a time with
finite expectation.

Let $F:(0,\infty)^d\to\Re^d$ be a Lipschitz continuous function so that
the autonomous ordinary differential equation
\begin{equation}
\label{gendynsys}
\dot{s}=F(s)
\end{equation}
has a solution $s(t)$, $t\in[0,T)$, $T\in(0,\infty]$, for every
starting point $s(0)\in(0,\infty)^d$. We say that the dynamical system
(\ref{gendynsys}) is the {\em large system limit} of the Markov process
$S$, if 
$$
F_i(x)=\lim_{N\to\infty}\left(
q(\lfloor Nx\rfloor,\lfloor Nx\rfloor+e_i)
+\frac{1}{N}\sum_{v\not\in\set{e_1,\ldots,e_d}}v_i
q(\lfloor Nx\rfloor,\lfloor Nx\rfloor+v)\right),\quad i=1,\ldots,d,
$$
where $\lfloor y\rfloor$ denotes the largest integer less than or
equal to $y$ and is defined componentwise for vectors, and
$v=(v_1,\ldots,v_d)$ runs over the different possible transition
vectors. The idea is to scale the arrival rates to the system
(transitions $v=e_i$) as well as the states by $N$, divide by $N$ and
take the limit. Thus, we assume the internal transition and exit rates
to be linear in $N$ as functions of the state. Conditions of limit
theorems showing the convergence of the Markov process towards a
deterministic limit system in such scaling have been established by
Kurtz \cite{kurtz81}. This type of results, however, tell nothing
about the stability of a stochastic system with finite $N$, and
therefore we don't review them closer here. 

The dynamical system (\ref{gendynsys}) is called {\em locally
asymptotically stable} around an equilibrium state $s^*$ (that is, a
state with $F(s^*)=0$), if there exists an open set $U$ containing
$s^*$ such that $\lim_{t\to\infty}s(t)=s^*$ for any initial state
$s(0)\in U$. The system is called {\em globally asymptotically
stable}, if it has a unique equilibrium state $s^*$, such that
$\lim_{t\to\infty}s(t)=s^*$ for any initial state
$s(0)\in(0,\infty)^d$. Finally, we call the system (\ref{gendynsys})
{\em (globally) stable}, if there is a compact set
$K\subset(0,\infty)^d$ such that for any initial state
$s(0)\in(0,\infty)^d$ the system reaches $K$ and eventually stays in $K$.

The stability of a large system limit is not known to be sufficient
nor necessary for the stability of the original Markov process. We are
not aware of any rigorous results concerning this question, but the
following remarks illuminate the difficulties in relating the two
notions. First, assume that the large system limit exists and is
globally asymptotically stable. When $d\ge3$, the trajectories can
however be very complicated and the convergence toward equilibrium
very slow. A stochastic system, how well ever fitted to the continuous
state space, evolves in jumps and does not follow any trajectory ---
only its local drift is in the best case close to the derivative of a
trajectory passing the same point. Thus, it is hard to imagine how the
stability of the stochastic system could be deduced without more
specific assumptions. Second, a dynamic system can escape to infinity
along a single trajectory (say, along the diagonal) while all other
trajectories end to a compact set. In such a case, the stochastic
systems with all $N$ could however be stable, since randomness forces
them to deviate from the transient trajectory.

However, these circumstances often coincide, and, as we shall see here
also, proving the stability of the dynamical system is usually much
easier than proving the stability of the Markov process. Therefore it
is interesting to consider the large system limits together with the
original random systems.

All the systems studied in this paper possess differentiable large
system limits, and the existence of a unique solution from any
starting point is thus always granted. They are, however, non-linear,
and proving their stability seems to be very hard in some cases. There
are no black-box tools applicable in general. 

The following elementary lemma is sometimes useful when considering
the asymptotic behaviour of a dynamical system. For completeness, a
proof is given in the Appendix.

\begin{lemma}
\label{odelemma}
Let $a$ and $b$ be Lipschitz continuous functions
$[0,\infty)\to(0,\infty)$. The unique solution $u$ of the differential
equation
$$
\dot{u}_t=b_t-a_tu_t,\quad t\ge0,
$$
with initial condition $u_0\ge0$ is positive for every $t>0$ and satisfies
$$
\frac{\liminf_{t\to\infty}b_t}{\limsup_{t\to\infty}a_t}\le
\liminf_{t\to\infty} u_t\le\limsup_{t\to\infty} u_t
\le\frac{\limsup_{t\to\infty}b_t}{\liminf_{t\to\infty}a_t}
$$
whenever the fractions are well-defined.
\end{lemma}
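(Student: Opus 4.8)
The plan is to treat the linear ODE $\dot u_t = b_t - a_t u_t$ by its explicit variation-of-constants formula and then extract the asymptotic bounds by a squeezing argument. First I would write $A_t = \int_0^t a_s\,\D s$, note that $A_t$ is $C^1$, nondecreasing, and satisfies $A_t \to \infty$ (since $a$ is bounded below away from $0$ on $[0,\infty)$ because it is Lipschitz continuous into $(0,\infty)$—actually one only needs $\int_0^\infty a_s\,\D s = \infty$, which follows from $a_t > 0$ continuous together with $\liminf a_t$ being relevant only when finite; I would handle the degenerate possibility $a_t \to 0$ separately, where the claimed upper bound is $+\infty$ and the lower bound is interpreted accordingly). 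The solution is then
$$
u_t = e^{-A_t} u_0 + \int_0^t e^{-(A_t - A_s)} b_s\,\D s .
$$
Positivity for $t>0$ is immediate from this formula since $u_0 \ge 0$, $b_s > 0$, and the exponential kernel is positive.

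Next I would establish the upper bound $\limsup_t u_t \le (\limsup_t b_t)/(\liminf_t a_t)$. Fix any $\beta > \limsup_t b_t$ and any $\alpha$ with $0 < \alpha < \liminf_t a_t$; choose $T$ so large that $b_s \le \beta$ and $a_s \ge \alpha$ for all $s \ge T$. Split the integral at $T$: the contribution of $[0,T]$ is bounded by $e^{-(A_t-A_T)}\bigl(u_0 + \int_0^T e^{A_s}b_s\,\D s\bigr)$... wait, more cleanly, the whole "initial segment" term is $e^{-A_t}\bigl(u_0 + \int_0^T e^{A_s} b_s\,\D s\bigr)$, a constant times $e^{-A_t} \to 0$. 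For $s \in [T,t]$ one has $A_t - A_s \ge \alpha(t-s)$, so $\int_T^t e^{-(A_t-A_s)} b_s\,\D s \le \beta \int_T^t e^{-\alpha(t-s)}\,\D s \le \beta/\alpha$. Letting $t\to\infty$ gives $\limsup_t u_t \le \beta/\alpha$, and then letting $\beta \downarrow \limsup b$ and $\alpha \uparrow \liminf a$ yields the bound. The lower bound $\liminf_t u_t \ge (\liminf_t b_t)/(\limsup_t a_t)$ is symmetric: fix $\beta < \liminf_t b_t$ and $\alpha > \limsup_t a_t$, pick $T$ with $b_s \ge \beta$, $a_s \le \alpha$ for $s \ge T$, drop the nonnegative initial term, use $A_t - A_s \le \alpha(t-s)$ to get $\int_T^t e^{-(A_t-A_s)}b_s\,\D s \ge \beta\int_T^t e^{-\alpha(t-s)}\,\D s = (\beta/\alpha)(1 - e^{-\alpha(t-T)}) \to \beta/\alpha$, and then optimize over $\beta,\alpha$.

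The only genuinely delicate point is bookkeeping around the cases where the limits are $0$ or $\infty$, i.e.\ ensuring the phrase "whenever the fractions are well-defined" is respected and that $A_t \to \infty$ so the initial-condition term vanishes; this is routine but must be stated carefully. Everything else is the standard Gronwall-type estimate made explicit, so I expect no real obstacle—the main "work" is simply organizing the two symmetric $\varepsilon$-arguments cleanly.
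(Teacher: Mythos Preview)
Your proposal is correct and takes a genuinely different route from the paper. You use the explicit variation-of-constants formula $u_t=e^{-A_t}u_0+\int_0^t e^{-(A_t-A_s)}b_s\,\D s$ and estimate the integral directly after splitting at a large time $T$; the paper instead never writes down the solution but argues by comparison: from a time $\sigma$ onward it sandwiches $\dot u$ between $b_*-a^*u$ and $b^*-a_*u$, solves these constant-coefficient ODEs, and reads off the bounds. Your approach makes positivity a one-line consequence of the formula, whereas the paper spends a paragraph on a first-hitting-time argument to rule out $u_t=0$. Your method also isolates more transparently where the hypothesis $A_t\to\infty$ (equivalently $\liminf a>0$) is actually needed---only for the upper bound, to kill the initial-segment contribution---while the lower bound goes through by simply dropping nonnegative terms; the paper's comparison argument bundles both directions together. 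One small slip: Lipschitz continuity into $(0,\infty)$ does \emph{not} force $a$ to be bounded away from zero (e.g.\ $a_t=1/(1+t)^2$), but you immediately self-correct and note that the upper bound is vacuous when $\liminf a=0$, so the argument stands. Both proofs are elementary; yours is arguably a bit cleaner, the paper's is slightly more self-contained in that it avoids invoking the solution formula.
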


\section{Models and results}\label{modelsec}

\subsection{Plain Random Contact system}\label{plainrcsec}

Our first and simplest model is defined in Figure
\ref{plainrcpic}. The number of non-seed peers with chunk 0 (1) is
denoted by $X$ ($Y$). Peers arrive according to a Poisson process
with parameter $\lambda$, make a random contact, download whatever
chunk the contacted peer has (if the seed was contacted, the
downloaded chunk is chosen randomly), then make repeated random
contacts at Poisson rate 1 until the remaining chunk is found, and
leave the system. The system relies entirely on randomness, with fatal
consequences.

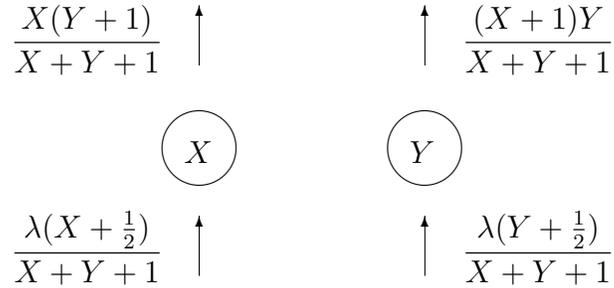
\begin{figure}[ht]
\begin{center}
\begin{minipage}{9.5cm}
\begin{picture}(90,40)
\put(30,20){\circle{10}}
\put(28,18){\makebox{$X$}}
\put(30,3){\vector(0,1){8}}
\put(5,5){\makebox{$\displaystyle{\frac{\lambda(X+\frac12)}{X+Y+1}}$}}
\put(30,31){\vector(0,1){8}}
\put(5,33){\makebox{$\displaystyle{\frac{X(Y+1)}{X+Y+1}}$}}
\put(60,20){\circle{10}}
\put(58,18){\makebox{$Y$}}
\put(60,3){\vector(0,1){8}}
\put(65,5){\makebox{$\displaystyle{\frac{\lambda(Y+\frac12)}{X+Y+1}}$}}
\put(60,31){\vector(0,1){8}}
\put(65,33){\makebox{$\displaystyle{\frac{(X+1)Y}{X+Y+1}}$}}
\end{picture}
\end{minipage}
\end{center}
\caption{Plain Random Contact system}
\label{plainrcpic}
\end{figure}

The large system limit of the Plain Random Contact system is the
dynamical system defined by the non-linear ordinary differential equations
\begin{equation}
\label{detplaindef}
\dot{x}=\frac{(\lambda-y)x}{x+y}\quad\dot{y}=\frac{(\lambda-x)y}{x+y}
\end{equation}
This system limit is easily seen to be unstable even close to its
equilibirium $x=y=\lambda$:

\begin{proposition}\label{plain-unstable-prop}
In $(0,\infty)^2$, the system (\ref{detplaindef}) has a single
equilibrium $(x^*,y^*)=(\lambda,\lambda)$. The system is not stable in
any environment of $(x^*,y^*)$. Starting with $x_0>\lambda>y_0$ we
have $x_t\to\infty$ and $y_t\to0$, and vice versa.
\end{proposition}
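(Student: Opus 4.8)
The plan is to work directly with the sign structure of the vector field; no general stability machinery is needed here. For the equilibrium statement, observe that in $(0,\infty)^2$ the requirement $\dot x=\dot y=0$ reduces to $(\lambda-y)x=0$ and $(\lambda-x)y=0$, which, since $x,y>0$, forces $x=y=\lambda$; thus $(\lambda,\lambda)$ is the unique rest point. The heart of the argument is that the region $R^+=\{(x,y):x>\lambda>y\}$ is forward invariant. Indeed, whenever $y<\lambda$ one has $\dot x=(\lambda-y)x/(x+y)>0$, and whenever $x>\lambda$ one has $\dot y=(\lambda-x)y/(x+y)<0$. I would then run a ``first exit time'' argument: if a trajectory started in $R^+$ first touched $\{x=\lambda\}\cup\{y=\lambda\}$ at some finite time $\tau$, then on $[0,\tau]$ the coordinate $x$ is strictly increasing and $y$ strictly decreasing, so $x_\tau>x_0>\lambda$ and $y_\tau<y_0<\lambda$, contradicting that the trajectory lies on the boundary at time $\tau$. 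Hence $R^+$ is forward invariant, and by the $x\leftrightarrow y$ symmetry of (\ref{detplaindef}) so is $R^-=\{(x,y):y>\lambda>x\}$.

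Next I would establish $x_t\to\infty$ for trajectories in $R^+$. Global existence on $[0,\infty)$ is easy: $\dot x<\lambda$ rules out finite-time blow-up, and $\dot y>-y$ keeps $y_t\ge y_0e^{-t}>0$, so the solution never leaves $(0,\infty)^2$. On $[0,\infty)$ the coordinate $x$ is then strictly increasing and $y$ strictly decreasing, so $y_t/x_t$ is decreasing and $y_t\le y_0<\lambda$; consequently
$$\dot x_t=(\lambda-y_t)\,\frac{x_t}{x_t+y_t}\ \ge\ (\lambda-y_0)\,\frac{x_0}{x_0+y_0}\ >\ 0,$$
a fixed positive lower bound, which forces $x_t\to\infty$ (at least linearly). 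Given this, $(\lambda-x_t)/(x_t+y_t)\to-1$ because $y_t$ stays bounded, so $\dot y_t\le-\tfrac12 y_t$ for all large $t$ and hence $y_t\to0$ exponentially. This proves the last assertion for $R^+$; the ``vice versa'' case follows from the symmetry.

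The instability claim is then immediate: any open set $U$ containing $(\lambda,\lambda)$ contains a point $(\lambda+\varepsilon,\lambda-\varepsilon)\in R^+$ for small $\varepsilon>0$, and the trajectory from that point has $x_t\to\infty$, so it leaves $U$ (indeed every compact subset of $(0,\infty)^2$) and does not converge to $(\lambda,\lambda)$; hence the system is not stable in any neighbourhood of $(\lambda,\lambda)$. As a sanity check one may compute the Jacobian of the right-hand side of (\ref{detplaindef}) at $(\lambda,\lambda)$, which equals $\smat{0 & -1/2 \\ -1/2 & 0}$, with eigenvalues $\pm\tfrac12$ and unstable eigendirection $(1,-1)$ pointing into $R^+$ --- consistent with the global picture above.

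I expect the only genuinely delicate point to be the forward invariance of $R^+$: the first-exit-time argument must be run carefully, using monotonicity of each coordinate on the \emph{closed} time interval up to the hypothetical exit, together with a separate verification of global existence so that all the conclusions hold for every $t\ge0$ rather than only on a finite horizon. Once invariance is in hand, the remaining steps are elementary ODE estimates.
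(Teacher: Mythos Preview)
Your proposal is correct and follows exactly the route the paper has in mind: the paper's own proof is the single sentence ``This is immediate from the equations (\ref{detplaindef}),'' and what you have written is precisely the sign-structure reading of those equations, carried out in full detail (invariance of $R^+$, global existence, linear growth of $x$, exponential decay of $y$, and the Jacobian check). There is nothing to correct; you have simply supplied the steps the paper leaves to the reader.
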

\begin{proof}
This is immediate from the equations (\ref{detplaindef}). 
\end{proof}

As long as both $x$ and $y$ are large and roughly of the same size,
the system empties rapidly. However, it is rather straightforward to
prove that the stochastic system is unstable when $\lambda$
is larger than one. 

{\em Remark:} In this paper, we are not interested in the possible
stability of the system when the input rate $\lambda$ is sufficiently
low. In all our models, $\lambda$ appears in the large scale limits as
a pure scaling parameter that can be as well chosen to be one. The
stability of the stochastic system may, however, depend on
$\lambda$. Susitaival and Aalto \cite{susitaivalaalto07} study by
simulations several two-chunk systems also from the point of view of
stability regions in terms of $\lambda$.

\begin{proposition}
With $\lambda>1$, the (stochastic) Plain Random Contact
system is unstable. More exactly, almost surely either $X$ or $Y$ escapes to
infinity whereas the other obtains ultimately only the values 0 and 1. 
\end{proposition}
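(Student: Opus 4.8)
The engine of the argument is the behaviour of the difference $D:=X-Y$. In the Plain Random Contact system the only transitions are $X\to X\pm1$ and $Y\to Y\pm1$, so $D$ moves in unit steps, and a one‑line computation of the four rates gives that the drift of $D$ in state $(x,y)$ equals $\dfrac{(\lambda-1)(x-y)}{x+y+1}$. Since $\lambda>1$, $D$ always drifts \emph{away} from $0$: imbalance is self‑reinforcing. Complementing this, $N:=X+Y$ has arrival rate exactly $\lambda$ and departure rate $(2xy+x+y)/(x+y+1)$, so there is a constant $A=A(\lambda)$ with $\mathcal{L}N\le-1$ whenever $\min(x,y)\ge A$ --- the process is pushed back towards the coordinate axes. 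The plan is: (1) a \emph{runaway lemma} saying that from a state with one coordinate large and the other in $\{0,1\}$ the claimed behaviour already holds with high probability; (2) transience of the chain, deduced because recurrence would contradict the runaway lemma; (3) a ``returns to a core'' argument promoting ``high probability'' to ``probability $1$''.

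\textbf{Runaway lemma.} There exist $M_0$, $c\in(0,1)$ and a sequence $\varepsilon_M\downarrow0$ such that for every $M\ge M_0$ and every initial state with $x\ge M$, $y\le1$, with probability $\ge1-\varepsilon_M$ one has $D_t\ge cM$ for all $t$, $X_t\to\infty$, and $Y_t\in\{0,1\}$ for all sufficiently large $t$. I would prove this from two interlocking estimates around the stopping time $\sigma=\inf\{t:D_t\le cM\}$. (a) While $Y$ stays below a fixed level $L$, the ratio of the up‑rate to the down‑rate of $D$ is bounded below by a constant $\rho=\rho(L,\lambda)>1$ (for $Y\le1$ and $x\ge M_0$ one may take $\rho$ just below $\min(\lambda,(\lambda+1)/2)$); hence $\rho^{-D_t}$ is a bounded supermartingale there, and optional stopping gives that $D$ falls to $cM$ before $Y$ climbs to $L$ with probability at most $\rho^{-(1-c)M}$. (b) On the event that $D_t\ge cM$ and $Y_t\le L$ throughout, the drift of $D$ is bounded below by a positive constant, so $X_t\ge D_t$ grows at least linearly; consequently $Y_t$ is stochastically dominated by a birth--death chain whose birth rate (of order $1/X_t$) shrinks, excursions of $Y$ above $1$ become increasingly rare and short, and a Borel--Cantelli estimate summed over the dyadic windows $X_t\in[2^k,2^{k+1})$ shows that a.s.\ only finitely many of them occur. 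Feeding the growth bound of (b) back into (a) to bound the chance that $Y$ ever reaches $L$, and absorbing all exceptional events into $\varepsilon_M$, yields the lemma; the statement with $X$ and $Y$ interchanged holds by symmetry.

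\textbf{Transience and conclusion.} The chain is irreducible on $\mathbb{N}^2$ (each displayed rate is positive at the relevant neighbour), hence either recurrent or transient. If it were recurrent it would return to $(0,0)$ a.s.; but from $(0,0)$ there is positive probability of reaching a state with $x\ge M$, $y=0$ (force $M$ chunk‑$0$ arrivals and nothing else on a unit interval), from where the runaway lemma gives probability $\ge1-\varepsilon_M>0$ of $X_t\to\infty$ --- impossible under recurrence once $\varepsilon_M<1$. So the chain is transient and $X_t+Y_t\to\infty$ a.s. Moreover $\min(X_t,Y_t)<A$ infinitely often: were $\min\ge A$ eventually, then $N_t+t$ would eventually be a non‑negative supermartingale (by $\mathcal{L}N\le-1$ there), hence convergent, forcing $N_t\to-\infty$. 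Combining this with $X_t+Y_t\to\infty$, the process enters $\{\min\le A-1,\ \max\ge M\}$ at infinitely many stopping times; from such a state a unit‑time Poisson estimate lets the at most $A-1$ minority peers all leave with no compensating minority arrival, reaching $\{\min\le1,\ \max\ge M-A\}$ with probability bounded below by some $q>0$, and then the runaway lemma applies. Thus at each of these infinitely many stopping times the conditional probability that the asserted asymptotics occur thereafter exceeds $q(1-\varepsilon_{M-A})>0$; since these times tend to infinity (transience), L\'evy's $0$--$1$ law forces that event to occur almost surely, which is exactly the claimed dichotomy.

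\textbf{Main obstacle.} Everything above is routine except the runaway lemma, and within it the mutual dependence of (a) and (b): the exponential supermartingale for $D$ needs $Y$ confined below $L$, whereas confining $Y$ needs $X$ (hence $D$) large and growing, and a priori one has neither. Turning this into a clean bootstrap --- with tail bounds sharp enough to reach the precise conclusion $Y_t\in\{0,1\}$ eventually, rather than merely ``$Y_t$ bounded'' --- is the crux of the argument; the hypothesis $\lambda>1$ enters here precisely, as what makes the drift of $D$ point outward and makes $\rho>1$.
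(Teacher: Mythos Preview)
Your overall architecture is sound and matches the paper's at the top level: establish a ``runaway'' phenomenon from highly imbalanced states with positive probability, show that such states are visited infinitely often, and upgrade to an almost-sure statement. The drift computations for $D=X-Y$ and $N=X+Y$ are correct, and your recurrence-to-core step (transience, $\min<A$ i.o., L\'evy's $0$--$1$ law) is a clean packaging of what the paper does more implicitly.

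The genuine difference is in how the runaway lemma is obtained. The paper does \emph{not} work with $D$ or with a geometric supermartingale; instead it constructs an explicit pair of auxiliary time-inhomogeneous birth--death processes $(\tilde X,\tilde Y)$ with deterministic rates --- in particular $\alpha^{\tilde Y}_t=O(1/t)$ and $\beta^{\tilde Y}_t=\tilde Y_t$ --- and couples $(X,Y)$ to it on a positive-probability event. This sidesteps exactly the circularity you flag as the main obstacle: because $\tilde Y$'s birth rate is \emph{prescribed} to decay like $1/t$ rather than inferred from the growth of $X$, one can first prove $\tilde Y\in\{0,1\}$ ultimately (via $\E{N_\infty}<\infty$ for the $1\!\to\!2$ transitions) and $\tilde X_t\to\infty$ independently, and only afterwards verify the coupling inequalities. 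Your route, by contrast, keeps the analysis on the original chain and extracts the same conclusions from the exponential supermartingale $\rho^{-D_t}$ together with a domination of $Y$; this is arguably more conceptual (it makes the role of the outward drift of $D$ transparent), but the bootstrap between ``$Y$ stays below $L$'' and ``$D$ grows linearly'' must be closed carefully, e.g.\ by working on $[0,\tau_1\wedge\tau_2)$ and iterating over dyadic scales as you sketch. Either approach works; the paper's buys a cleaner decoupling of the two estimates at the cost of an ad hoc auxiliary process, yours buys a more intrinsic argument at the cost of the bootstrap you correctly identify as the crux.
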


\begin{proof}
We couple $(X,Y)$ with a process $(\tilde{X},\tilde{Y})$ such that
\begin{enumerate}
\item\label{couplineqs}
$\pr{\forall t\ge0\ X_t\ge\tilde{X}_t,\,Y_t\le\tilde{Y}_t}>0$,
\item\label{Xbig}
$\pr{\lim_{t\to\infty}\tilde{X}_t=\infty}=1$, and
\item\label{Ysmall}
$\pr{\tilde{Y}_t\in\set{0,1}\mbox{ ultimately}}=1$.
\end{enumerate}
Assume that $\lambda>1$ and choose a number $a\in(0,\lambda-1)$. Let
$\tilde{X}$ and $\tilde{Y}$ be mutually dependent, inhomogeneous birth
and death processes with up-jump and down-jump intensities defined as
follows, respectively:
$$
\alpha^{\tilde{Y}}_t=\frac{3\lambda}{2a(t\vee1)},
\quad\beta^{\tilde{Y}}_t=\tilde{Y}_t;
\quad\quad\alpha^{\tilde{X}}_t=a+1,
\quad\beta^{\tilde{X}}_t=\ \tilde{Y}_t+1.
$$
We have
$$
\pr{\tilde{Y}_t\ge1}\le\E{\tilde{Y}_t}
=\int_0^t\alpha^{\tilde{Y}}_se^{-(t-s)}\D{s}.
$$
Dividing the integration interval into sub-intervals $(0,1]$,
$(1,t-2\log t]$ and $(t-2\log t,t]$ we obtain the upper bound
\begin{equation}
\label{Ytildeub}
\pr{\tilde{Y}_t\ge1}
\le\frac{3\lambda}{2a}\left(e^{-t}+\frac{1}{t^2}+\frac{1}{t-2\log t}\right)
\sim\frac{3\lambda}{2at}.
\end{equation}
Let $N$ denote the counting process of the transitions of $\tilde{Y}$ from 1
to 2. Inequality (\ref{Ytildeub}) yields that there are a.s.\ only
finitely many such transitions. Indeed,
$$
\E{N_\infty}=\E{\int_0^\infty\alpha^{\tilde{Y}}_t\indic{\tilde{Y}_t=1}\D{t}}
\le\int_0^\infty\alpha^{\tilde{Y}}_t\pr{\tilde{Y}_t\ge1}\D{t}<\infty,
$$
since the rightmost integrand is $O(t^{-2})$. Thus, $\tilde{Y}$ has
the property \ref{Ysmall}. Since
$$
\lim_{t\to\infty}\frac{1}{t}\int_0^t\tilde{Y}_s\D{s}=0<a,
$$
we also obtain \ref{Xbig}. 

Now, choose $M\in\Nat$ so large that
$$
\frac{\lambda}{\displaystyle1+\frac{2}{M}}>a+1,
$$ 
set $X_0=\tilde{X}_0\ge M$, $Y_0=\tilde{Y}_0=0$ and define
$$
\tau=\inf\eset{t}{\tilde{X}_t<at\mbox{ or }\tilde{Y}_t>1}.
$$
Then obviously $\pr{\tau=\infty}>0$, and on $\set{\tau=\infty}$ we can
couple $(X,Y)$ with $(\tilde{X},\tilde{Y})$ in the desired way thanks
to domination relations between the respective intensities. Hence,
there exists a positive number $p$ such that
\begin{equation}
\label{posescape}
\cPr{\lim_{t\to\infty}X_t=\infty,\ Y_t\in\set{0,1}\mbox{
ultimately}}{X_0=x_0, Y_0=0}>p>0
\end{equation}
when $x_0\ge M$. By symmetry, the corresponding relation holds if $X$
and $Y$ are interchanged.

Next, let $R>(3/2)\lambda$ and note that
$$
\pr{(X_t,Y_t)\in[R\vee M,\infty)\times[R\vee M,\infty)\mbox{
ultimately}}=0.
$$
Indeed, when $X_t\wedge Y_t\ge R$, the total output rate of the system
is larger than the total input rate $\lambda$:
$$
\frac{2(X_t\wedge Y_t)}{\displaystyle1+\frac{X_t\wedge Y_t+1}{X_t\vee Y_t}}
>\frac{2}{3}R>\lambda.
$$
Thus, $\set{X_t\wedge Y_t\le R\vee M}$ is a recurrent event. By
inspecting the intensities depicted in Figure \ref{plainrcpic} it is
easy to see that the probability of moving from a state with $X\wedge
Y\le R\vee M$ to a state with $X\wedge Y=0$ before a change in $X\vee
Y$ is bounded from below by a positive constant not depending on the
value of $X\vee Y$. It follows that
$$
\set{X_t\vee Y_t\ge R\vee M,\ X_t\wedge Y_t=0}
$$
is a recurrent event as well. Now, (\ref{posescape}) yields the proposition.
\end{proof}

\subsection{Deterministic First Chunk system}\label{dfcsec}

Our first attempt to overcome the spontaneous imbalance tendency of
the Plain Random Contact system was the Deterministic Last Chunk
mechanism introduced in \cite{norprarei06}, where each peer decides in
advance (randomly) which chunk it will download as the last one. The
idea was to prevent peers from downloading systematically the rarest
chunk as the last one before leaving the system. In the two-chunk
case, defined by Figure \ref{dfcfig}, it might be more natural to
speak about the Deterministic First Chunk system. The number of empty
peers determined to download chunk 0 (1) first is denoted by $A$
($B$), while $X$ and $Y$ have their previous meaning.

\begin{figure}[ht]
\begin{center}
\begin{minipage}{9.5cm}
\begin{picture}(90,70)
\put(30,20){\circle{10}}
\put(28,18){\makebox{$A$}}
\put(30,3){\vector(0,1){8}}
\put(20,5){\makebox{$\displaystyle{\frac{\lambda}{2}}$}}
\put(30,31){\vector(0,1){8}}
\put(5,33){\makebox{$\displaystyle{\frac{A(X+1)}{X+Y+1}}$}}
\put(60,20){\circle{10}}
\put(58,18){\makebox{$B$}}
\put(60,3){\vector(0,1){8}}
\put(65,5){\makebox{$\displaystyle{\frac{\lambda}{2}}$}}
\put(60,31){\vector(0,1){8}}
\put(65,33){\makebox{$\displaystyle{\frac{B(Y+1)}{X+Y+1}}$}}
\put(30,50){\circle{10}}
\put(28,48){\makebox{$X$}}
\put(30,61){\vector(0,1){8}}
\put(5,63){\makebox{$\displaystyle{\frac{X(Y+1)}{X+Y+1}}$}}
\put(60,50){\circle{10}}
\put(58,48){\makebox{$Y$}}
\put(60,61){\vector(0,1){8}}
\put(65,63){\makebox{$\displaystyle{\frac{(X+1)Y}{X+Y+1}}$}}
\end{picture}
\end{minipage}
\end{center}
\caption{Deterministic First Chunk system}
\label{dfcfig}
\end{figure}
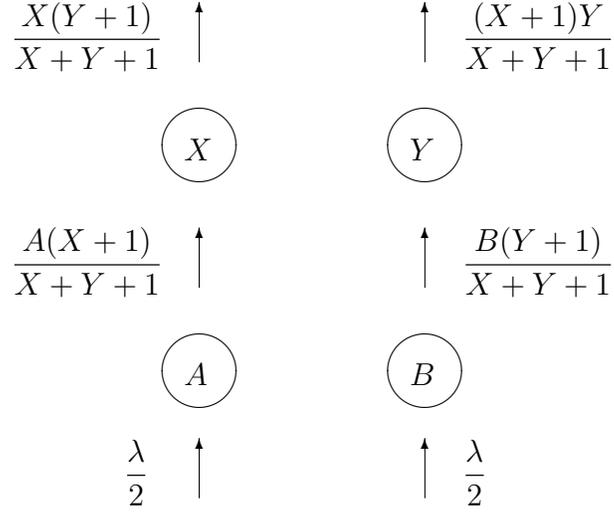

Although this balancing rule worked promisingly well in our
flash-crowd setup with many chunks, the two-chunk system is probably
unstable --- at least its large system limit 
\begin{eqnarray}
\label{detdfcdef}
\dot{a}=\frac{\lambda}{2}-\frac{ax}{x+y}
&\quad&\dot{b}=\frac{\lambda}{2}-\frac{by}{x+y}\\
\nonumber
\dot{x}=\frac{(a-y)x}{x+y}
&\quad&\dot{y}=\frac{(b-x)y}{x+y}
\end{eqnarray}
is unstable. 

\begin{proposition}
The system (\ref{detdfcdef}) has no unique equilibrium --- its
equilibria form the unbounded curve
$$
(a(\theta),b(\theta),x(\theta),y(\theta))
=(\theta,\frac{\lambda\theta}{2\theta-\lambda},
\frac{\lambda\theta}{2\theta-\lambda},\theta),
\quad\theta\in(\frac{\lambda}{2},\infty).
$$
Moreover, it has an open set of trajectories where two components
(either $x$ and $b$ or $y$ and $a$) grow to infinity while the other
two remain bounded.
\end{proposition}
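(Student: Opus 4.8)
The plan is to establish the two assertions separately: first the characterization of the equilibrium set, then the existence of an open family of escaping trajectories.

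\textbf{Equilibria.} Setting the right-hand sides of (\ref{detdfcdef}) to zero, the equations $\dot a=\dot b=0$ give $ax=\tfrac{\lambda}{2}(x+y)$ and $by=\tfrac{\lambda}{2}(x+y)$, while $\dot x=\dot y=0$ force (since $x,y>0$) $a=y$ and $b=x$. Substituting $a=y$ into $ax=\tfrac{\lambda}{2}(x+y)$ yields $xy=\tfrac{\lambda}{2}(x+y)$, and substituting $b=x$ into $by=\tfrac{\lambda}{2}(x+y)$ gives the same relation, so the two are consistent. Parametrizing by $\theta:=y$ (which by the stated answer should equal $a$), one solves $x\theta=\tfrac{\lambda}{2}(x+\theta)$ for $x=\tfrac{\lambda\theta}{2\theta-\lambda}$, valid and positive exactly when $\theta>\lambda/2$; then $b=x=\tfrac{\lambda\theta}{2\theta-\lambda}$ and $a=\theta$. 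This reproduces the claimed curve and shows it is the whole equilibrium set; its unboundedness is clear as $\theta\to\infty$ (and also as $\theta\downarrow\lambda/2$, where $x,b\to\infty$). This part is routine algebra.

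\textbf{Escaping trajectories.} Here is where the real work lies. By the $x\leftrightarrow y$, $a\leftrightarrow b$ symmetry of (\ref{detdfcdef}) it suffices to exhibit an open set of initial conditions from which $x,b\to\infty$ while $a,y$ stay bounded. The heuristic is that when $y$ is small and $x$ is large, $\dot a\approx \tfrac{\lambda}{2}-a$, so $a$ is pulled toward $\lambda/2$; then $\dot x\approx a-y\approx \lambda/2-y>0$ once $y<\lambda/2$, so $x$ grows; and $\dot y=\tfrac{(b-x)y}{x+y}$ is negative as soon as $b<x$, while $\dot b=\tfrac{\lambda}{2}-\tfrac{by}{x+y}$ stays close to $\lambda/2$ when $y/(x+y)$ is small, so $b$ also grows linearly but the feedback $(b-x)$ keeps $y$ decaying. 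The plan is to make this rigorous by constructing a forward-invariant region. Concretely, fix small $\varepsilon>0$ and consider a set of the form
$$
\mathcal{R}=\eset{(a,b,x,y)}{\tfrac{\lambda}{2}-\varepsilon<a<\tfrac{\lambda}{2}+\varepsilon,\ 0<y<\varepsilon,\ x>x_0,\ b>b_0,\ |x-b|<Cy\text{ (or similar)}},
$$
and check that on each face of $\mathcal{R}$ the vector field points inward: on $\{y=\varepsilon\}$ one needs $\dot y<0$, which holds if $x>b$ is controlled; on $\{a=\tfrac\lambda2\pm\varepsilon\}$ one needs $\dot a\lessgtr 0$, which follows from $ax/(x+y)$ being close to $a$ when $y\ll x$; on $\{x=x_0\}$ and $\{b=b_0\}$ one needs $\dot x>0$, $\dot b>0$. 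Inside $\mathcal{R}$ one then shows $\dot x\ge \tfrac\lambda2-2\varepsilon>0$, so $x\to\infty$, and via Lemma~\ref{odelemma} applied to $\dot y=b_t-a_ty_t$-type comparisons (or directly) that $y$ stays bounded — indeed one can push further and show $y\to0$. For $b$, note $\dot b\ge \tfrac\lambda2-\varepsilon$ once $y/(x+y)<2\varepsilon/\lambda$, giving $b\to\infty$ too; that $a,y$ remain bounded is built into $\mathcal{R}$.

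\textbf{Main obstacle.} The delicate point is the joint control of $x$ and $b$: both grow roughly linearly, so the sign of $\dot y$, governed by $b-x$, is a difference of two large quantities and could in principle change sign. I would handle this by tracking the difference $w:=x-b$, computing $\dot w=\tfrac{ax}{x+y}\cdot(\text{correction})-\tfrac{\lambda}{2}+\cdots$ carefully — schematically $\dot w = \dot x-\dot b = \tfrac{(a-y)x-(\tfrac\lambda2)(x+y)+by}{x+y}$ — and showing $w$ stays in a band that keeps $\dot y<0$ while not blowing up faster than the others; closing this estimate (choosing the constants $\varepsilon,C,x_0,b_0$ in the right order so that the invariance inequalities on all faces hold simultaneously) is the technical heart of the argument. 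Once the invariant region $\mathcal{R}$ is in hand, openness of the escaping set is immediate because $\mathcal{R}$ itself is open and is reached-and-trapped, and the divergence $x,b\to\infty$ together with boundedness of $a,y$ follows from the interior drift bounds.
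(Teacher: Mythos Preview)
Your treatment of the equilibrium set is correct and matches the paper's (tacit) computation.

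For the escaping trajectories, your overall strategy---exhibit an open forward-invariant region in which $x,b\to\infty$ and $a,y$ stay bounded---is exactly the paper's, but your specific choice of region leaves the hard step open, and the paper closes it with a different, simpler region. You take $a$ confined to a narrow window around $\lambda/2$, $y<\varepsilon$, and try to pin $|x-b|<Cy$; as you yourself flag, the sign and size of $x-b$ is then a difference of two linearly growing quantities, and you do not show how to close the inequalities on the $|x-b|=Cy$ face. The paper avoids this difficulty by choosing instead the region
\[
\Bigl\{\,x>b>\lambda,\quad y<\tfrac{\lambda}{2},\quad a>\tfrac{(x+y)\lambda}{2x}\,\Bigr\},
\]
and then rewriting the equations in the form
\[
\dot a=\frac{x}{x+y}\Bigl(\tfrac{(x+y)\lambda}{2x}-a\Bigr),\qquad
\dot b=\frac{y}{x+y}\Bigl(\tfrac{\lambda}{2}+\tfrac{\lambda}{2y}\,x-b\Bigr),\qquad
\dot x-\dot b=\frac{y}{x+y}\Bigl(\tfrac{x}{y}\bigl(a-\tfrac{(x+y)\lambda}{2x}\bigr)-(x-b)\Bigr).
\]
Each of these is of the linear type $\dot u=b_t-a_tu$ with positive coefficients, so Lemma~\ref{odelemma} (its positivity assertion) applies directly: the third equation keeps $x-b>0$ because its ``target'' $\tfrac{x}{y}\bigl(a-\tfrac{(x+y)\lambda}{2x}\bigr)$ is positive under the maintained inequality $a>\tfrac{(x+y)\lambda}{2x}$; the second keeps $b>\lambda$ since its target exceeds $\tfrac{\lambda}{2}+\tfrac{\lambda}{2y}x>\lambda$; and $x>b$ then forces $\dot y<0$, so $y$ decreases and $\tfrac{(x+y)\lambda}{2x}$ decreases toward $\lambda/2$, preserving the constraint on $a$. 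Thus the key trick is not to bound $|x-b|$ from above, but simply to keep $x-b>0$, and to pair that with the lower bound $a>\tfrac{(x+y)\lambda}{2x}$ rather than a two-sided window for $a$. Your ``main obstacle'' is precisely what this rewriting of $\dot x-\dot b$ dissolves.
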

\begin{proof}
Choose the initial values so that
$$
x(0)>b(0)>\lambda,\quad y(0)<\frac12\lambda,\quad
a(0)>\frac{(x(0)+y(0))\lambda}{2x(0)}.
$$
Then the same relations hold for the whole paths, as seen, with help
of Lemma \ref{odelemma}, by writing
\begin{eqnarray*}
\dot{a}&=&\frac{x}{x+y}\left(\frac{(x+y)\lambda}{2x}-a\right),\\
\dot{b}&=&\frac{y}{x+y}\left(\frac{\lambda}{2}
  +\frac{\lambda}{2y}\cdot x-b\right),\\
\dot{x}-\dot{b}&=&\frac{y}{x+y}\left(\frac{x}{y}\left(
a-\frac{(x+y)\lambda}{2x}\right)-(x-b)\right).
\end{eqnarray*}
Moreover, $x$ and $b$ grow toward infinity, $y$ decreases and $a$
approaches the value $\lambda/2$.
\end{proof}

We don't have a proof for the stochastic case, but on the basis of
similarity in behavior to the Plain Random Contact system (and
supported by some simulations), we conjecture that this system be
unstable as well:

\begin{conjecture}
With $\lambda$ large enough, the (stochastic) Deterministic First Chunk
system is unstable --- a.s., either $X$ or $Y$ escapes to infinity,
together with $B$ or $A$, respectively. 
\end{conjecture}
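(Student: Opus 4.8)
Since the statement is conjectured rather than claimed, what follows is a proposed line of attack. The natural one imitates the instability proof for the Plain Random Contact system: exhibit a configuration from which the process runs off to infinity along a fixed ``branch'' with probability bounded below, and then a recurrence argument showing that such a configuration is revisited infinitely often from every initial state, so that the runaway eventually occurs almost surely. By the symmetry of Figure~\ref{dfcfig} under the exchange $(A,X)\leftrightarrow(B,Y)$ it suffices to treat one branch --- the one on which chunk~$1$ becomes the rare chunk, $Y$ stays small, its would-be downloaders $B$ pile up, and $X$ together with $B$ escapes to infinity.

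The recurrence half should go through much as for the Plain Random Contact system. One checks that $\set{X_t\wedge Y_t\le R}$ and $\set{A_t\wedge B_t\le R}$ are recurrent events for $R$ large: the total departure rate of the system exceeds $\lambda$ once $X\wedge Y\ge R$, and the net inflow into $A+B$ becomes negative once $A\wedge B\ge R$. From a state where both quantities are $\le R$ one then argues --- by draining whichever of $X,Y$ is the smaller down toward $0$ while the other grows under its positive drift --- that a ``launching'' state, with $Y$ small, $X$ large, and $A,B$ small relative to $X$, is reached with probability bounded below. An estimate of the type (\ref{posescape}), together with symmetry, would then reduce the conjecture to the escape half.

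The escape half is the hard part, and is precisely what makes this a conjecture. In the Plain Random Contact system the ``losing'' coordinate a.s.\ performs only finitely many up-transitions; that is no longer available here. Along the would-be runaway the ratio $B/X$ tends to $1$ rather than to $0$ (already in the deterministic system one checks that $b-x$ stays bounded while both grow), and the seed keeps injecting particles into the rare chunk --- even when $Y=0$, the birth intensity $\frac{B(Y+1)}{X+Y+1}$ of $Y$ equals $\frac{B}{X+1}$, of order $B/X\to1$. This is an $O(1)$ effect that the $N\to\infty$ limit does not see: there the corresponding term is $\frac{by}{x+y}$, which vanishes as $y\to0$, and that is exactly why the deterministic runaway can force $y\to0$, whereas the finite system cannot. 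Consequently $Y$ can at best be coupled with a near-critical birth-and-death process rather than an eventually-absorbed one, and one would have to show that its excursions away from $0$ are rare and short enough that the steady inflow of rate $\lambda/2$ into each of $X$ and $B$ still dominates for $\lambda$ large. This amounts to understanding the true finite-$N$ behaviour --- which simulations suggest is oscillatory, with $Y$ bursting up, briefly eating into $X$ and $B$ while releasing $A$, then receding --- and proving that across such cycles $X$ and $B$ (or, on the mirror branch, $Y$ and $A$) retain a strictly positive net drift. I expect this to be the main obstacle.
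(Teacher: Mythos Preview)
The paper gives no proof of this statement; it is explicitly left as a conjecture, supported only by the analogy with the Plain Random Contact system and by simulations. Your outline therefore already goes further than the paper does.

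Your plan and your identification of the obstacle are both sound. The recurrence half is plausible and parallels the argument for Proposition~3.2. Your diagnosis of why the escape half fails to transfer is exactly right: in the Plain Random Contact proof the up-intensity of the rare coordinate at level $0$ is $\lambda/(2(X+1))=O(1/X)$, which is what allows the coupling with an eventually-absorbed process. Here that intensity is $B/(X+1)$, and since along the deterministic runaway of Proposition~3.3 the ratio $b/x$ stays bounded away from zero (indeed $x-b$ is controlled by an equation whose damping and forcing both vanish as $x\to\infty$), the stochastic up-intensity of $Y$ at $0$ is genuinely $O(1)$. So $Y$ is near-critical rather than subcritical, and the coupling of Proposition~3.2 cannot be salvaged without a finer, second-order analysis of the excursions of $Y$ and their effect on the drift of $X$ and $B$. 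That is precisely the open problem, and neither you nor the paper closes it.
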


However, we don't fix a conjecture about a Deterministic Last Chunk
system with three or more chunks.

\subsection{Friedman system}\label{friedmansec}

Consider an urn containing balls with two colors. The simplest version
of Friedman's urn \cite{friedman49} works so that one repeatedly picks
a random ball from the urn and returns it together with a ball of the
opposite color. The proportions of the two colors approach
$\frac12-\frac12$ \cite{freedman65}.

We now modify the Plain Random Contact system by assuming (without
bothering how this might be realised) that arriving peers make a
random contact and then enter the system with a copy of the chunk that
the contacted peer did {\em not} have (in the case that the seed was
contacted, the downloaded chunk is chosen randomly); we call this
'complementary' random input 'Friedman input'. This system is
defined by Figure \ref{stochfriedmandef}.

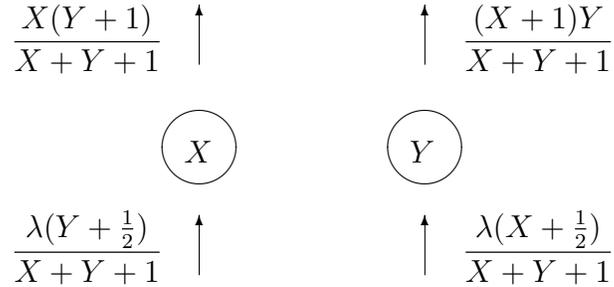
\begin{figure}[ht]
\begin{center}
\begin{minipage}{9.5cm}
\begin{picture}(90,40)
\put(30,20){\circle{10}}
\put(28,18){\makebox{$X$}}
\put(30,3){\vector(0,1){8}}
\put(5,5){\makebox{$\displaystyle{\frac{\lambda(Y+\frac12)}{X+Y+1}}$}}
\put(30,31){\vector(0,1){8}}
\put(5,33){\makebox{$\displaystyle{\frac{X(Y+1)}{X+Y+1}}$}}
\put(60,20){\circle{10}}
\put(58,18){\makebox{$Y$}}
\put(60,3){\vector(0,1){8}}
\put(65,5){\makebox{$\displaystyle{\frac{\lambda(X+\frac12)}{X+Y+1}}$}}
\put(60,31){\vector(0,1){8}}
\put(65,33){\makebox{$\displaystyle{\frac{(X+1)Y}{X+Y+1}}$}}
\end{picture}
\end{minipage}
\end{center}
\caption{Friedman system}
\label{stochfriedmandef}
\end{figure}

The corresponding large system limit is
\begin{equation}
\label{detfriedmandef}
\dot{x}=\frac{(\lambda-x)y}{x+y}\quad\dot{y}=\frac{(\lambda-y)x}{x+y}
\end{equation}

\begin{proposition}\label{detfriedmanprop}
The system (\ref{detfriedmandef}) has a single equilibrium
$(x^*,y^*)=(\lambda,\lambda)$ which is globally stable.  
\end{proposition}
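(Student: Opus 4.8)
The plan is to change variables to $w=x+y$ and $z=x-y$, in which the system decouples conveniently. Since $xy=(w^2-z^2)/4$, a short computation gives $\dot z=-\lambda z/w$ and $\dot w=\lambda-\tfrac12 w+z^2/(2w)$. First note that in $(0,\infty)^2$ the equilibrium conditions $(\lambda-x)y=0$ and $(\lambda-y)x=0$ force $x=y=\lambda$, so $(\lambda,\lambda)$ is the unique equilibrium; in the new coordinates it is $z=0$, $w=2\lambda$. Hence it suffices to prove $z_t\to0$ and $w_t\to2\lambda$ from an arbitrary start with $w_0>0$ and $|z_0|<w_0$.

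Next I would trap $w$ in a compact subinterval of $(0,\infty)$. Because $z^2/(2w)\ge0$ we have $\dot w\ge\lambda-\tfrac12 w$, and a continuity (first-passage) argument shows $w_t\ge m:=\min(w_0,2\lambda)>0$ for all $t$: if $w$ first reached a level $c<m$, then $\dot w\le0$ there, contradicting $\dot w\ge\lambda-\tfrac12 c>0$. The $z$-equation gives $\frac{d}{dt}z^2=-2\lambda z^2/w\le0$, so $z_t^2\le z_0^2$ for all $t$; plugging this back in, $\dot w\le(\lambda+z_0^2/(2m))-\tfrac12 w$, and the same first-passage argument bounds $w$ above by a constant $M=M(w_0,z_0)$. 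Thus $m\le w_t\le M$ for all $t\ge0$.

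With $w$ confined this way, $\dot z=-(\lambda/w)z$ with $\lambda/w\ge\lambda/M>0$, hence $|z_t|\le|z_0|e^{-\lambda t/M}\to0$. Consequently $z_t^2/(2w_t)\to0$, and writing $\dot w=b_t-\tfrac12 w_t$ with $b_t:=\lambda+z_t^2/(2w_t)$ --- a bounded Lipschitz function of $t$ on $[0,\infty)$, taking values in $(0,\infty)$, with $b_t\to\lambda$ --- Lemma \ref{odelemma} applied with $a_t\equiv\tfrac12$ yields $2\lambda\le\liminf_t w_t\le\limsup_t w_t\le2\lambda$, i.e.\ $w_t\to2\lambda$. (Alternatively one integrates the scalar linear equation for $w_t-2\lambda$ directly.) Returning to the original coordinates, $x_t=(w_t+z_t)/2\to\lambda$ and $y_t=(w_t-z_t)/2\to\lambda$ for every initial state in $(0,\infty)^2$, which together with uniqueness of the equilibrium is the claim.

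The only genuinely delicate point is the lower bound $w_t\ge m>0$: the $w$-equation is singular at $w=0$ because of the term $z^2/(2w)$, so a generic comparison theorem does not apply directly; the resolution is that this singular term has the favourable sign, which lets the continuity/first-passage argument go through. Once $w$ is pinned inside a compact interval bounded away from $0$ and $\infty$, the exponential decay of $z$ and the convergence of $w$ are routine.
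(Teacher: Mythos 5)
Your proof is correct, but it takes a genuinely different route from the paper's. The paper's proof is a single observation: from $\dot x=\frac{(\lambda-x)y}{x+y}$ and $\dot y=\frac{(\lambda-y)x}{x+y}$ one reads off that $\dot x$ has the sign of $\lambda-x$ and $\dot y$ the sign of $\lambda-y$, so each of $x$ and $y$ is monotone and approaches $\lambda$; since each coordinate is monotone toward $\lambda$, both stay bounded away from $0$, and the only way $\dot x\to0$ with $(\lambda-x)\not\to0$ would be $y/(x+y)\to0$, i.e.\ $y\to0$, which is excluded. That is the entirety of the paper's argument. You instead pass to $w=x+y$, $z=x-y$, observe the exact decoupling $\dot z=-\lambda z/w$ together with the scalar equation $\dot w=\lambda-\frac12w+z^2/(2w)$, trap $w$ in a compact interval via first-passage arguments, deduce exponential decay of $z$, and then use Lemma \ref{odelemma}. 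All of these steps are sound (including your careful handling of the singular term $z^2/(2w)$, which has the right sign to make the lower bound on $w$ work). What the paper's approach buys is brevity and complete elementarity; what your approach buys is a template that transfers almost verbatim to the harder Enforced Friedman system in Section \ref{forcedfriedmansec}, where the paper itself switches to the essentially equivalent coordinates $\rho=x+y$ and $\beta=\bigl((x-y)/(x+y)\bigr)^2$ and follows a very similar trap-then-converge scheme. In other words, your proof is overkill for this proposition but correctly anticipates the structure needed for Proposition \ref{detforcedfriedmanprop}.
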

\begin{proof}
The equations tell immediately that with any initial state in
$(0,\infty)^2$, both $x$ and $y$ converge monotonically to
$(\lambda,\lambda)$.
\end{proof}

Moreover, we note that $(x-\lambda)^2+(y-\lambda)^2$ is monotonically
decreasing:
$$
\frac{\D}{\D{t}}((x-\lambda)^2+(y-\lambda)^2)
=-\frac{2}{x+y}\left((x-\lambda)^2y+(y-\lambda)^2x\right)<0.
$$
This observation of a simple Lyapunov function can also be used for proving
the stability of the stochastic Friedman system:

\begin{proposition}\label{stochfriedmanprop}
The Friedman system defined in Figure \ref{stochfriedmandef} is stable
for any input rate $\lambda$.
\end{proposition}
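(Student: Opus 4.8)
The plan is a Foster--Lyapunov drift argument based on the quadratic Lyapunov function $V(x,y)=x^2+y^2$ --- essentially the quadratic that the deterministic computation above already flagged (replacing it by $(x-\lambda)^2+(y-\lambda)^2$ would change nothing essential at infinity). First I would write down the infinitesimal generator $\mathcal{L}$ of the Markov process read off from Figure~\ref{stochfriedmandef} and record that the chain is irreducible on all of $\Nat^2$ with finite transition-graph neighbourhoods: at every state the two arrival intensities $\tfrac{\lambda(Y+1/2)}{X+Y+1}$ and $\tfrac{\lambda(X+1/2)}{X+Y+1}$ are strictly positive, so either coordinate can be raised, and it can be lowered again down to the value just reached, so all states communicate.

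Next I would compute $\mathcal{L}V$ exactly. With $N=X+Y+1$, expanding the four transition terms gives
$$
N\,\mathcal{L}V(X,Y)=-2XY(X+Y)-2(X^2+Y^2)+(4\lambda+2)XY+(2\lambda+1)(X+Y)+\lambda .
$$
The point is that this is eventually strongly negative. As soon as $X+Y\ge 2\lambda+1$, the cross terms combine to $-2XY\bigl((X+Y)-(2\lambda+1)\bigr)\le 0$ and may be dropped; once moreover $X+Y\ge 4\lambda+2$ one has $X^2+Y^2\ge\tfrac12(X+Y)^2\ge(2\lambda+1)(X+Y)$, so that
$$
\mathcal{L}V(X,Y)\ \le\ \frac{-(X^2+Y^2)+\lambda}{X+Y+1}\ \le\ \frac{-\tfrac12(X+Y)^2+\lambda}{X+Y+1}\ \xrightarrow[\;X+Y\to\infty\;]{}\ -\infty .
$$
Hence there is a finite set $C=\{(x,y)\in\Nat^2:\,x+y\le R\}$, with $R$ large, outside of which $\mathcal{L}V\le-1$, while $\mathcal{L}V$ is bounded on the finite set $C$. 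The case needing most care is a small coordinate, say $Y=0$: there the cross terms vanish and it is the term $-2X^2$ --- produced by departures at node $X$, which occur at rate $\tfrac{X}{X+1}\to1$ and decrease $X^2$ by about $2X$ --- that drives the drift to $-\infty$. This is also why a linear candidate such as $x+y$ fails once $\lambda>1$: on the axis the total departure rate is only $\tfrac{X}{X+1}<1<\lambda$, so $x+y$ has positive drift there.

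Finally I would close the argument with Dynkin's formula applied to $V$ along the process stopped at $\tau_C=\inf\{t\ge0:(X_t,Y_t)\in C\}$: since $\mathcal{L}V\le-1$ before $\tau_C$, this yields $\mathbb{E}_{(x_0,y_0)}[\tau_C]\le V(x_0,y_0)<\infty$ for every initial state $(x_0,y_0)$, and the uniform bound $\mathcal{L}V\le c$ with $c<\infty$ (its maximum over $C$), together with $V\to\infty$, rules out explosion before $\tau_C$. By the stability criterion recalled in Section~\ref{prelisec} --- an irreducible chain with finite transition neighbourhoods that reaches a fixed finite set from every state in finite expected time is stable --- the Friedman system is stable; and since $\lambda$ enters only through the size of $C$, this holds for every input rate $\lambda$. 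The only genuinely non-routine step is the choice of the quadratic Lyapunov function together with the verification of negative drift near the coordinate axes; everything else is bookkeeping with the generator.
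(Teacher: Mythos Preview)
Your proposal is correct and follows essentially the same approach as the paper: both take the quadratic Lyapunov function $V(X,Y)=X^2+Y^2$, compute its drift under the generator (the paper phrases this via compensators of the jump counting processes, you via $\mathcal{L}V$, but the resulting expression is identical), bound it by a negative constant outside a finite set, and conclude finite expected hitting time via a supermartingale/Dynkin argument. Your organisation of the drift estimate (splitting off $-2XY((X+Y)-(2\lambda+1))$ and then using $X^2+Y^2\ge\tfrac12(X+Y)^2$) is arguably tidier than the paper's $M=X\vee Y$, $m=X\wedge Y$ bookkeeping, and your explicit remark on why a linear Lyapunov function fails on the axes is a nice addition, but the substance is the same.
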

\begin{proof}
Denote by $N^{x+}$ and $N^{x-}$ the counting processes of the up- and
down-jumps of the process $X$, respectively, so that
$X_t=X_0+\int_0^t(\D{N^{x+}_s}-\D{N^{x-}_s})$. We can then write
\begin{eqnarray*}
X^2_t-X^2_0&=&\int_0^t((X_{s-}+1)^2-X_{s-}^2)\D{N^{x+}_s}
  -\int_0^t((X_{s-}-1)^2-X_{s-}^2)\D{N^{x-}_s}\\
&=&\int_0^t(2X_{s-}+1)\D{N^{x+}_s}-\int_0^t(2X_{s-}-1)\D{N^{x-}_s}.
\end{eqnarray*}
(Note that $N^{x-}$ does not jump when $X_{s-}=0$.) The compensators
of $N^{x+}$ and $N^{x-}$ are, respectively,
$$
A^{x+}_t=\int_0^t\frac{\lambda(Y_s+\frac12)}{X_s+Y_s+1}\D{t},\quad
A^{x-}_t=\int_0^t\frac{X_s(Y_s+1)}{X_s+Y_s+1}\D{t}.
$$
Using similar notation for the process $Y$, we see that the process
$X^2_t+Y^2_t$ is compensated to a martingale by subtracting from it
the process $A_t=\int_0^ta_s\D{t}$, where
$$
a=\left(4\lambda(X+\frac12)(Y+\frac12)-2X(X-\frac12)(Y+1)-2(X+1)(Y-\frac12)Y\right)/(X+Y+1).
$$
Denoting $m=X\wedge Y$, $M=X\vee Y$, we have the estimate
\begin{eqnarray*}
a&=&\frac{1}{M+m+1}\left(4\lambda(M+\frac12)(m+\frac12)
  -2M(M-\frac12)(m+1)-2(M+1)m(m-\frac12)\right)\\
&\le&\frac{2M(m+1)}{M+m+1}\left(2\lambda(1+\frac{1}{2M})-M+\frac12\right)\\
&\le&\frac12(3\lambda+\frac12-M)\\
&\le&-\frac14
\end{eqnarray*}
on the set $\set{M>3\lambda+1}$. 
Denote $\tau=\inf\eset{t}{M_t\le3\lambda+1}$. Assume now that
$M_0=X_0\vee Y_0>3\lambda+1$. Then, the process
$B_t=X^2_{t\wedge\tau}+Y^2_{t\wedge\tau}$ is a
non-negative supermartingale and thus converges to an integrable limit
satisfying $\E{B_\infty}\le\E{B_0}$, and $B_t-A_{t\wedge\tau}$ is a
martingale (for integrability, note that $B_t$ is dominated by
$(X_0+Y_0+N^{x+}_t+N^{y+}_t)^2$, where $N^{x+}_t+N^{y+}$ is a Poisson
process). Since $A_{t\wedge\tau}$ is non-increasing, we have
$$
B_0\ge
B_0-\lim_{t\to\infty}\E{B_t}=-\lim_{t\to\infty}\E{A_{t\wedge\tau}}
\ge\E{\int_0^\tau\frac14\D{t}}=\frac14\E{\tau}.
$$
Thus, the finite set $\eset{(x,y)\in\Nat^2}{x\vee y\le3\lambda+1}$ is
reached from any fixed initial state outside of it in a time having
finite expectation. 
\end{proof}

\subsection{Delayed Friedman system}\label{delfriedsec}

Our first distributed implementation of the idea of the Friedman
system (see \cite{reittunorrosphyscomnet}) was the system defined by Figure
\ref{delfriedmanpic}. An arriving peer first makes a random contact
and then {\em decides} to download first the chunk that the contacted
peer does {\em not} have (in the case that the seed was contacted, the
peer decides randomly). As with the Deterministic First Chunk system,
the number of empty peers determined to download chunk 0 (1) first is
denoted by $A$ ($B$), while $X$ and $Y$ have the same meaning as
before. We call this the Delayed Friedman system, because the
subsystem $(X,Y)$ obtains Friedman input with stochastic delay.

\begin{figure}[ht]
\begin{center}
\begin{minipage}{9.5cm}
\begin{picture}(90,70)
\put(30,20){\circle{10}}
\put(28,18){\makebox{$A$}}
\put(30,3){\vector(0,1){8}}
\put(5,5){\makebox{$\displaystyle{\frac{\lambda(Y+\frac12)}{X+Y+1}}$}}
\put(30,31){\vector(0,1){8}}
\put(5,33){\makebox{$\displaystyle{\frac{A(X+1)}{X+Y+1}}$}}
\put(60,20){\circle{10}}
\put(58,18){\makebox{$B$}}
\put(60,3){\vector(0,1){8}}
\put(65,5){\makebox{$\displaystyle{\frac{\lambda(X+\frac12)}{X+Y+1}}$}}
\put(60,31){\vector(0,1){8}}
\put(65,33){\makebox{$\displaystyle{\frac{B(Y+1)}{X+Y+1}}$}}
\put(30,50){\circle{10}}
\put(28,48){\makebox{$X$}}
\put(30,61){\vector(0,1){8}}
\put(5,63){\makebox{$\displaystyle{\frac{X(Y+1)}{X+Y+1}}$}}
\put(60,50){\circle{10}}
\put(58,48){\makebox{$Y$}}
\put(60,61){\vector(0,1){8}}
\put(65,63){\makebox{$\displaystyle{\frac{(X+1)Y}{X+Y+1}}$}}
\end{picture}
\end{minipage}
\end{center}
\caption{Delayed Friedman system.}
\label{delfriedmanpic}
\end{figure}
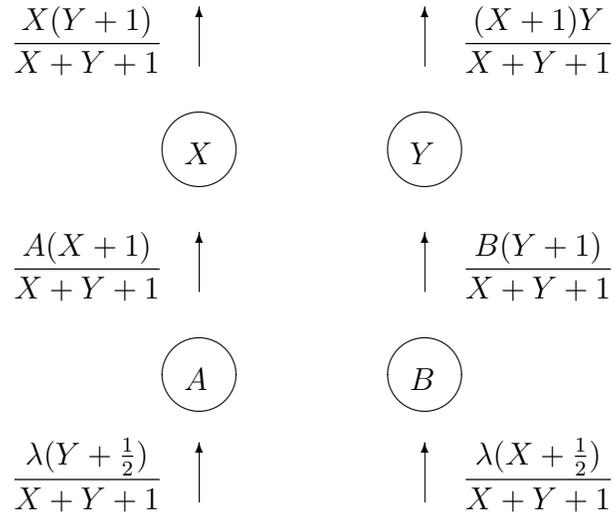

The corresponding large system limit is the dynamical system
\begin{eqnarray}
\label{detdelfried}
\dot{a}=\frac{\lambda y-ax}{x+y}&&\dot{b}=\frac{\lambda x-by}{x+y}\\
\nonumber
\dot{x}=\frac{(a-y)x}{x+y}&&\dot{y}=\frac{(b-x)y}{x+y}.
\end{eqnarray}
This system is difficult to analyse, because it oscillates heavily
around its unique equilibrium
$(a^*,b^*,x^*,y^*)=(\lambda,\lambda,\lambda,\lambda)$.  The 'logic' of
the oscillating system evolution from an imbalanced state, depicted in Figure
\ref{estrffig}, is the following:
\begin{itemize}
\item there is hardly any input to nor output from $y$ 
\item the 'Friedman rule' directs input to $a$, which accumulates
almost all of it, since $x$ is negligible
\item when enough mass has accumulated to $a$, the balance starts to improve 
\item once $x$ has become macroscopic, $a$ and $y$ empty rapidly
\item $b$ has not had time to grow, so we get a situation close to the mirror
image of the original.
\end{itemize}

\begin{figure}[ht]
\begin{center}
\includegraphics[width=10cm]{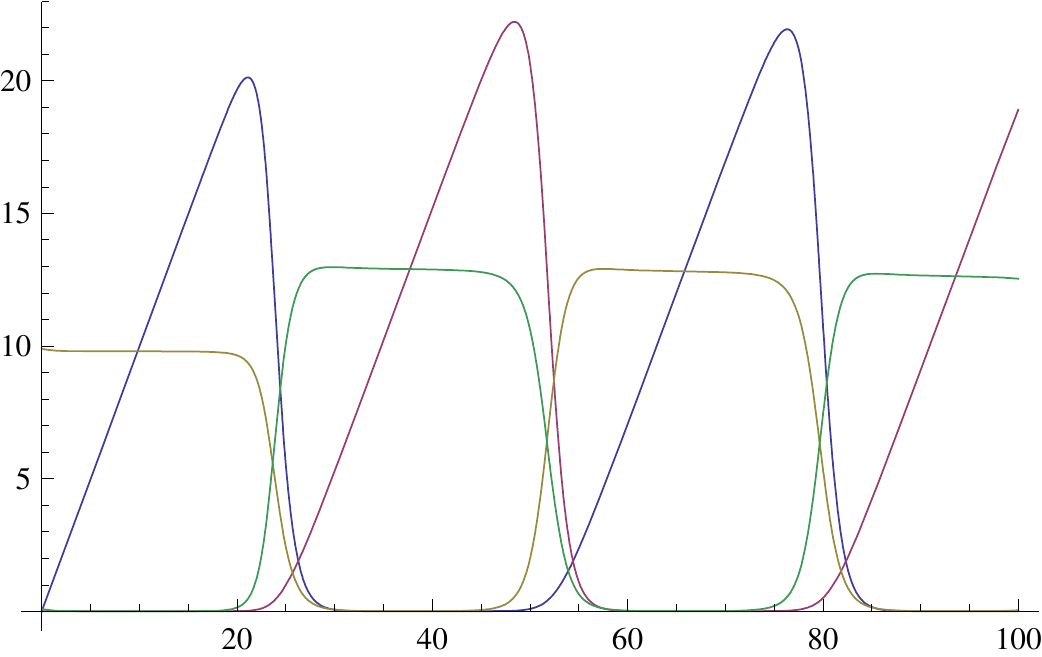}\\
\end{center}
\caption{A trajectory of the large system limit
(\protect\ref{detdelfried}) of the Delayed Friedman system with
initial values $a=b=0$, $x=0.1$, $y=9.9$. Colors: $a$: blue, $b$: red,
$x$: green, $y$: brown.}
\label{estrffig}
\end{figure}

Numerical experiments like that shown in Figure \ref{rhodown12} (left)
hint to global stability, but we have not found an explicit Lyapunov
function or other means to prove this. As regards local behavior near
equilibrium, the linearised system at equilibrium is essentially a
two-dimensional harmonic oscillator --- the two remaining dimensions
correspond to negative eigenvalues. If we continue the numerical
computation of a trajectory of (\ref{detdelfried}), we see slow
convergence toward equilibrium (Figure \ref{rhodown12}, right; by the
way, it is interesting to observe that $x+y$ seems to never descend
below 2). Indeed, the system (\ref{detdelfried}) turns out to be
locally asymptotically stable, and this can be shown in a
non-elementary but basically straightforward way through a center
manifold analysis (see \cite{chha:metbif,Kuznetsov:applbifurc}).

\begin{figure}[ht]
\begin{center}
\hspace*{\fill}\includegraphics[width=6cm]{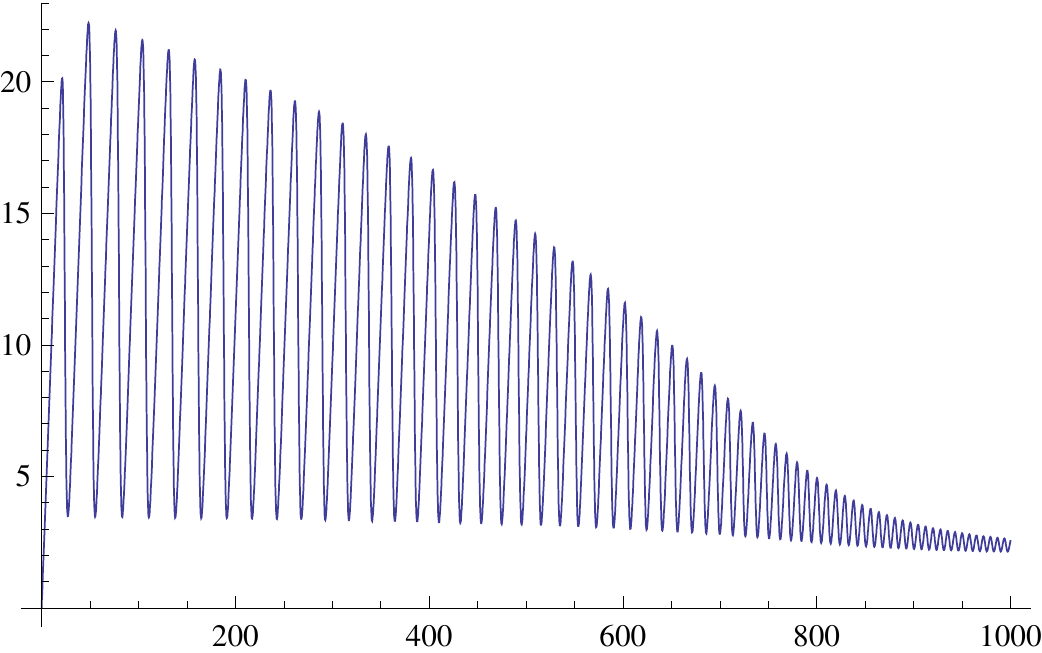}
\hspace{\fill}
\includegraphics[width=6cm]{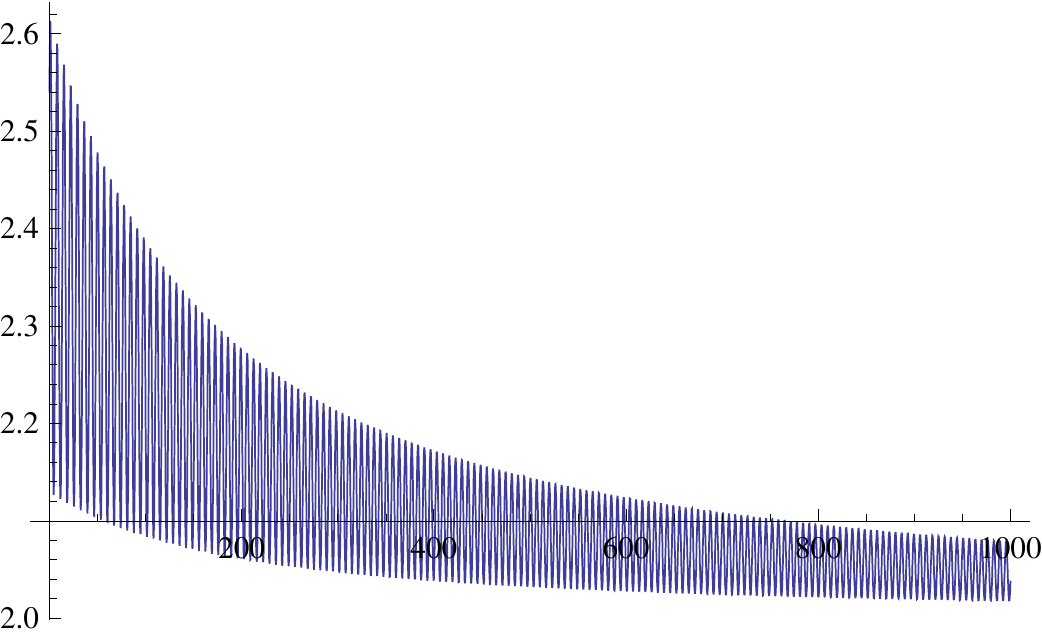}
\hspace*{\fill}\\
\end{center}
\caption{The behaviour of $x+y$, initial state as in Figure
\protect{\ref{estrffig}}. Left: $t\in[0,1000]$. Right:
$t\in[1000,2000]$.}
\label{rhodown12}
\end{figure}

\begin{proposition}
The system defined by (\ref{detdelfried}) is locally asymptotically
stable. If the starting point $(a_0,b_0,x_0,y_0)$ is close enough to
the equilibrium $(\lambda,\lambda,\lambda,\lambda)$, the distance to
it decreases proportionally to $1/\sqrt{t}$. 
\end{proposition}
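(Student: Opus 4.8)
Since $\lambda$ enters only as a scaling parameter (replace $(a,b,x,y)$ by $(\lambda a,\lambda b,\lambda x,\lambda y)$ and leave $t$ unchanged), we may take $\lambda=1$. Write $a=1+p$, $b=1+q$, $x=1+u$, $y=1+v$; then (\ref{detdelfried}) becomes $\dot p=(v-p-u-pu)/(2+u+v)$, $\dot q=(u-q-v-qv)/(2+u+v)$, $\dot u=(p-v)(1+u)/(2+u+v)$, $\dot v=(q-u)(1+v)/(2+u+v)$, with the equilibrium at the origin, and the whole nonlinear system is invariant under the involution $\sigma:(p,q,u,v)\mapsto(q,p,v,u)$. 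Passing to the symmetric coordinates $s_1=p+q,\ s_2=u+v$ and the antisymmetric coordinates $d_1=p-q,\ d_2=u-v$, the linearisation decouples: on the $s$-plane it is $\dot s_1=-\tfrac12 s_1,\ \dot s_2=\tfrac12(s_1-s_2)$, one Jordan block with eigenvalue $-\tfrac12$; on the $d$-plane it is the matrix $L=\tfrac12\bmat{-1 & -2\\ 1 & 1}$, with trace $0$ and determinant $\tfrac14$, hence eigenvalues $\pm\tfrac{i}{2}$ -- the harmonic oscillator. Linearisation thus leaves stability undecided, and the plan is a two-dimensional centre-manifold reduction followed by a first-Lyapunov-coefficient (normal-form) computation.

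By the centre manifold theorem there is a two-dimensional invariant manifold $W^{c}=\{s=h(d)\}$, tangent to the $d$-plane at the origin, which every nearby trajectory approaches at an exponential rate; the equilibrium is (asymptotically) stable for the full system iff the origin is (asymptotically) stable for the flow restricted to $W^{c}$. The Taylor expansion of $W^{c}$ at the origin is uniquely determined, and since $\sigma$ acts as $(s,d)\mapsto(s,-d)$ and maps centre manifolds to centre manifolds, that expansion is $\sigma$-invariant, i.e. $h$ is even and $h(d)=O(|d|^{2})$; for the same reason the $d$-component of the vector field is odd in $d$ for fixed $s$, so at second order it contains only a bilinear $s\!\cdot\!d$ term and no pure $d^{2}$ term. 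Substituting $s=h(d)$ therefore gives a planar reduced system $\dot d=Ld+(\text{cubic in }d)+O(|d|^{4})$, where $L$ is conjugate to rotation of frequency $\tfrac12$; in a complex coordinate $z$ this is $\dot z=\tfrac{i}{2}z+c\,z|z|^{2}+O(|z|^{4})$, with the usual quadratic resonant terms absent by symmetry rather than removed by a near-identity change of variables, and $\operatorname{Re}c$ is the first Lyapunov coefficient: $\operatorname{Re}c<0$ gives asymptotic stability.

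Concretely, one first computes the quadratic part $h_2$ of $h$ from the homological equation $Dh_2(d)[Ld]-L_{s}h_2(d)=Q(d)$, where $L_{s}$ is the stable Jordan block and $Q$ the $d^{2}$-part of the $s$-component; it is uniquely solvable because the eigenvalue $-\tfrac12$ of $L_{s}$ is not a sum of two eigenvalues of $L$. Inserting $s=h(d)$ together with the quadratic and cubic parts of the $d$-component into the standard first-Lyapunov-coefficient formula (see \cite{Kuznetsov:applbifurc}) yields $\operatorname{Re}c$, which I expect to be negative, in agreement with the numerics of Figure \ref{rhodown12}. Establishing this sign (and that it is nonzero, so that the cubic term genuinely dominates) is the one laborious step and the main obstacle; everything else is then routine. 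Granting $\operatorname{Re}c<0$, local asymptotic stability follows by the reduction principle. For the decay rate: on $W^{c}$ the polar radius obeys $\dot\rho=(\operatorname{Re}c)\rho^{3}+O(\rho^{4})$, whence $\rho(t)^{-2}=-2(\operatorname{Re}c)\,t+O(\sqrt t)$ and $\rho(t)\sim\bigl(-2(\operatorname{Re}c)\,t\bigr)^{-1/2}$; since an arbitrary nearby trajectory reaches $W^{c}$ at an exponential rate and on $W^{c}$ one has $|s|=|h(d)|=O(|d|^{2})$, the Euclidean distance of $(a,b,x,y)(t)$ to $(\lambda,\lambda,\lambda,\lambda)$ is $O(1/\sqrt t)$ in general and of exact order $1/\sqrt t$ for every initial point off the two-dimensional strong stable manifold of the equilibrium, which is the asserted behaviour.
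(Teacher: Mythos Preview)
Your plan is the same as the paper's: reduce to a two-dimensional centre manifold and compute the first Lyapunov coefficient of the resulting planar system. Your use of the involution $(a,b,x,y)\mapsto(b,a,y,x)$ to split off symmetric and antisymmetric coordinates is a tidy organisational device that the paper does not exploit; it explains cleanly why the reduced equation on $W^c$ has no quadratic terms and why the centre-manifold graph $h$ is even, whereas the paper just computes $h$ and $g$ term by term in an ad hoc basis of eigenvectors. Both routes lead to the same normal form problem.

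The genuine gap is that you stop exactly where the content of the proposition lies. You write that $\operatorname{Re}c$ ``I expect to be negative, in agreement with the numerics'' and call establishing its sign ``the one laborious step and the main obstacle''. That step \emph{is} the proof: everything else is standard centre-manifold and normal-form machinery that works regardless of the sign, so until the coefficient is actually computed nothing has been shown. The paper carries the computation through (after dropping the common denominator $x+y$, which only reparametrises time and hence cannot change the sign of the resonant coefficient): they find $h$ to second order, obtain the reduced field $g(u)$ to third order, pass to a complex coordinate, kill the non-resonant cubic terms, and arrive at
\[
w'=i\,w-\tfrac{32+11i}{18}\,|w|^{2}w+O(|w|^{4}),
\]
so that $\operatorname{Re}c=-\tfrac{32}{18}<0$. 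In your time scale (with the denominator retained) the linear part has frequency $\tfrac12$ rather than $1$, but the sign of the cubic coefficient is invariant under the smooth time change, and your symmetry-adapted coordinates should make the algebra somewhat shorter than the paper's. Once you have that number, your decay-rate paragraph (including the refinement that the exact $1/\sqrt t$ order holds precisely off the strong stable manifold) is correct and slightly sharper than the paper's statement.
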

\begin{proof}
Since the common denominator of the right hand sides of
(\ref{detdelfried}) does not affect the trajectories, and since
$\lambda$ is a pure scaling factor, it is sufficient to consider the
system
\[\x'=\bmat{x_4-x_1\,x_3\\x_3-x_2\,x_4\\(x_1-x_4)\,x_3\\(x_2-x_3)\,x_4}
\itext{with equilibrium}\p=\smat{1\\1\\1\\1}\ .\] 
At $\p$, the
linearized system is given by the matrix
\[\A=\bmat{-1&0&-1&1\\0&-1&1&-1\\1&0&0&-1\\0&1&-1&0}\ .\]
This has eigenvalues $\,\lambda_{1,2}=\pm i\,$ and $\,\lambda_{3,4}=-1\,$.
The latter has just one linearly independent eigenvector. 
Let $\,\v_1\,$ and $\,\v_2\,$ be the
real and imaginary parts of an eigenvector corresponding to $\,\lambda_1=i\,$ 
and let $\,\v_3\,$ be an eigenvector for $\,\lambda_3=-1\,$ and $\,\v_4\,$
such that $\,\A\,\v_4+\v_4=\v_3\,$. Putting these into matrix $\,\V\,$ 
we obtain the similarity transformation
\[\V^{-1}\,\A\,\V=\bmat{0&1&0&0\\-1&0&0&0\\0&0&-1&1\\0&0&0&-1}\itext{with}
\V=\bmat{1&1&0&1\\-1&-1&0&1\\-1&0&1&0\\1&0&1&0}\ .\] Thus, the system
has a two-dimensional stable manifold $\,W^s\,$ and a two-dimensional
center manifold $\,W^c\,$.  The corresponding columns of $\,\V\,$ are
tangents to these at $\p$ (see e.g. \cite{chha:metbif}), and the
manifolds are invariant under the flow of the system. Trajectories
close to $\,\p\,$ approach the center manifold like
$\,t\,e^{-t}\,$. By the reduction principle (see
\cite{Kuznetsov:applbifurc}), in order to study the stability of the
system, it suffices to know how it behaves on the local center
manifold. For this we will reduce the system to a normal form (see
\cite{Kuznetsov:applbifurc}).

Take first a linear change of coordinates:
\[\x=\p+\V\,\smat{\u\\\v}\,,\itext{where}\u,\v\in\Re^2\ .\]
Then the local center manifold can be expressed as
\[W^c=\eset{\p+\V\,\smat{\u\\\h(\u)}}{\norm{\u}<d}\ ,\]
where $\,\h(\u)\,$ is a two-vector with $\,\h(0,0)=0\,$ and $\,D\h(0,0)=0\,$
(the latter because of tangency).

Write $\,\smat{\f^c\\\f^s}=\V^{-1}\,\f\,$. Then the requirement of 
invariance of $\,W^c\,$ amounts to
\[\f^s(\p+\V\,\smat{\u\\\h(\u)})=
D\h(\u)\,\f^c(\p+\V\,\smat{\u\\\h(\u)})\ .\]
This is the equation for $\,\h\,$. We can recursively solve the coefficients 
of the Taylor expansion of $\,\h\,$. Note that the constant and linear
terms are zero. This way we get:
\[\h(\u)=\tfrac19\,\bmat{5\,u_1^2+u_1u_2-u_2^2\\
  -5\,u_1^2-14\,u_1u_2-2\,u_2^2}+O(\norm\u^3)\]
and 
\[\g(\u):=\f^c(\p+\V\,\smat{\u\\\h(\u)})=
  \bmat{-u_2+(-10\,u_1^3+8\,u_1^2u_2-2\,u_1u_2^2+u_2^3\,)/9\,\\u_1}
  +O(\norm\u^4)\ .\]
The behaviour of the system on the center manifold is given by equation
$\,\u'=\g(\u)\,.$ 
Following \cite{Kuznetsov:applbifurc} we change to complex variables. 
An eigenvector of the linear part of $\,\g\,$ corresponding to the eigenvalue
$\,i\,$ is $\,\q=\smat{i\\1}\,$. Setting $\,\u=z\,\q+\bar z\,\bar\q\,$  
we get $\,z=(u_2-i\,u_1)/2\,$ and
\[z'=i\,z+\tfrac{8+7i}{18}\,z^3+\tfrac{32+11i}{18}\,z^2\,\bar z
+\tfrac{32-11i}{18}\,z\,\bar z^2+\tfrac{-8+7i}{18}\,\bar z^3
+O(\abs z^4)\ .\]
Taking substitution $\,z=w+\beta_3\,w^3+\beta_2\,w^2\,\bar w+\beta_1\,w\,\bar w^2
+\beta_0\,\bar w^3 \,$ we obtain
\[w'=i\,w+(\tfrac{8+7i}{18}-2\,\beta_3)\,w^3-\tfrac{32+11i}{18}\,w^2\,\bar w
+(\tfrac{32-11i}{18}+2i\,\beta_1)\,w\,\bar w^2+(\tfrac{-8+7i}{18}+4i\,\beta_0)\,
\bar w^3+O(\abs w^4)\ .\]
We see that we can kill other third order terms except the $\,w^2\,\bar w\,$-term.
Hence choosing
\[\beta_0=\tfrac{7-8i}{72}\,,\ \beta_1=\tfrac{11+32i}{36}\,,\ 
\beta_2=0\,,\ \beta_3=\tfrac{7-8i}{36}\,,\] 
we obtain the equation
\[w'=i\,w-\tfrac{32+11i}{18}\,\abs w^2\,w+O(\abs w^4)\ .\]

Now, since
\[\tfrac{d\ }{dt}\abs w^2=     
(i\,w-\tfrac{32+11i}{18}\,\abs w^2\,w)\,\bar w
+w\,(-i\,\bar w-\tfrac{32-11i}{18}\,\abs w^2\,\bar w)+O(\abs w^5)
=-\tfrac{32}{9}\abs w^4+O(\abs w^5)\ ,\]
we see that $\,w\to 0\,$ like $\,1/\sqrt t\,$. Hence the system is
locally asymptotically stable. 
\end{proof}

If the third order terms of the equation of $\,w\,$ would not have
determined the stability, we should have gone to higher order expansions.

Our simulations showed that also the original stochastic system
oscillates strongly, with a roughly constant amplitude depending on
$\lambda$, even when it is started from a balanced state. The
simulations suggested that the stochastic system might be stable, but
we have not found a way to prove or disprove this. From a practical
viewpoint, the oscillations indicate that the system is not in good
balance, which could be fatal in an application with rapidly changing
arrival rates and flash crowd scenarios.

\subsection{Enforced Friedman system}\label{forcedfriedmansec}

The reason of the oscillatory behaviour of the Delayed Friedman system
was that when a peer finally succeeds in downloading its first chunk
according to the 'Friedman rule', applied possibly long ago, the
choice may already be outdated and thus counterproductive. Our last
model avoids this problem by realising choices only immediately or not
at all. An empty peer makes three contacts simultaneously (sampling
with replacement; the seed shows a random chunk). If two of the chunks
of the contacted peers differ from the third one, the latter is
downloaded. If all three chunks are similar, nothing is done, but the
empty peer stays in a waiting room and repeats the triple contact
operation after Exp(1)-distributed waiting time. The number of peers
in the waiting room is denoted by $Z$. Note that if the experiment was
successful (that is, a 2-1 situation was obtained), the probability of
downloading chunk 0 is $(Y+\frac12)/(X+Y+1)$ --- exactly the same as
in the Friedman system! Therefore we call this system, defined by
Figure \ref{forcefriedpic}, the Enforced Friedman system.

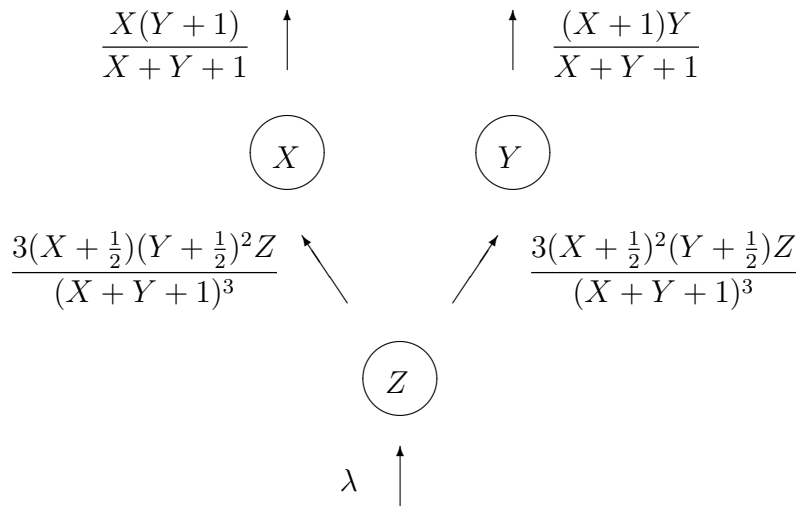
\begin{figure}[ht]
\begin{center}
\begin{minipage}{10.5cm}
\begin{picture}(105,70)(-7,0)
\put(45,20){\circle{10}}
\put(43,18){\makebox{$Z$}}
\put(45,3){\vector(0,1){8}}
\put(37,5){\makebox{$\lambda$}}
\put(38,30){\vector(-2,3){6}}
\put(-7,33){\makebox{$\displaystyle{\frac{3(X+\frac12)(Y+\frac12)^2Z}{(X+Y+1)^3}}$}}
\put(52,30){\vector(2,3){6}}
\put(62,33){\makebox{$\displaystyle{\frac{3(X+\frac12)^2(Y+\frac12)Z}{(X+Y+1)^3}}$}}
\put(30,50){\circle{10}}
\put(28,48){\makebox{$X$}}
\put(30,61){\vector(0,1){8}}
\put(5,63){\makebox{$\displaystyle{\frac{X(Y+1)}{X+Y+1}}$}}
\put(60,50){\circle{10}}
\put(58,48){\makebox{$Y$}}
\put(60,61){\vector(0,1){8}}
\put(65,63){\makebox{$\displaystyle{\frac{(X+1)Y}{X+Y+1}}$}}
\end{picture}
\end{minipage}
\end{center}
\caption{Enforced Friedman system.}
\label{forcefriedpic}
\end{figure}

The large system limit of the Enforced Friedman system is
\begin{eqnarray}
\label{detforcedfriedmandef}
\dot{z}&=&\lambda-\frac{3xyz}{(x+y)^2}\\
\label{xyz-eqs}
\dot{x}&=&\frac{3xy^2z}{(x+y)^3}-\frac{xy}{x+y}\\
\nonumber
\dot{y}&=&\frac{3x^2yz}{(x+y)^3}-\frac{xy}{x+y}.
\end{eqnarray}
This is trickier than the plain Friedman system, but nevertheless tractable:

\begin{proposition}\label{detforcedfriedmanprop}
The system (\ref{detforcedfriedmandef}) has in
$(0,\infty)\times(0,\infty)\times(0,\infty)$ a single equilibrium
point $(z^*,x^*,y^*)=(\frac43\lambda,\lambda,\lambda)$. Locally, the
equilibrium is a sink, i.e.\ all eigenvalues of the limiting linear
system are real and negative. This equilibrium is globally
asymptotically stable.
\end{proposition}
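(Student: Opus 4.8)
The plan is to dispose first of the two routine parts of the statement and then to reduce the global assertion to the two‑dimensional dynamics on the symmetry plane $\set{x=y}$.

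Since $\lambda$ enters (\ref{detforcedfriedmandef}) only through scaling --- if $(z,x,y)(\cdot)$ solves the system with $\lambda=1$, then $(\lambda z,\lambda x,\lambda y)(\cdot)$ solves it in general --- we may take $\lambda=1$. Equating the right‑hand sides to zero, the equations $\dot x=0$ and $\dot y=0$, after cancelling the positive factor $xy/(x+y)$, become $3yz=(x+y)^2=3xz$, which forces $x=y$; then $z=\tfrac43 x$, and $\dot z=0$ gives $x=1$. Thus $(z^*,x^*,y^*)=(\tfrac43,1,1)$ is the only equilibrium. For the linearization I would exploit the $x\leftrightarrow y$ symmetry together with the fact that $(x,y)\mapsto xy/(x+y)^2$ has vanishing gradient on the diagonal; this makes $\partial_x\dot z=\partial_y\dot z=0$ and $\partial_y\dot x=\partial_x\dot y=0$ at the equilibrium, so in the coordinate order $(z,x,y)$ the Jacobian is lower triangular,
\[\bmat{-\tfrac34 & 0 & 0\\ \tfrac38 & -\tfrac12 & 0\\ \tfrac38 & 0 & -\tfrac12}\ ,\]
with eigenvalues $-\tfrac34,-\tfrac12,-\tfrac12$. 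Hence the equilibrium is a sink and, in particular, locally asymptotically stable.

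For global convergence the decisive remark is that $u:=x-y$ obeys $\dot u=-\tfrac{3xyz}{(x+y)^3}\,u$, so $\abs{u_t}$ is non‑increasing and the plane $\set{x=y}$ is invariant. On that plane, with $s:=x+y$, the system collapses to the \emph{linear} two‑dimensional system $\dot z=1-\tfrac34 z$, $\dot s=\tfrac34 z-\tfrac12 s$, whose coefficient matrix again has eigenvalues $-\tfrac34,-\tfrac12$; every orbit in $\set{x=y}$ therefore converges to $(\tfrac43,1,1)$. It thus remains to prove: (a) every trajectory eventually enters, and thereafter remains in, a fixed compact set $K\subset(0,\infty)^3$; and (b) $u_t\to0$. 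Granting (a), statement (b) is immediate, since $\tfrac{3xyz}{(x+y)^3}\ge c>0$ on the compact set $K$, whence $u_t^2$ decays at least like $e^{-2ct}$ once the trajectory is in $K$.

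For (a) I would argue with elementary barriers. Since $3xy/(x+y)^2\le\tfrac34$ we have $\dot z\ge1-\tfrac34 z$, which keeps $z$ bounded away from $0$. Next, using $\abs{u_t}\le\abs{u_0}$ and $\abs{u}\le s$, the identity
\[\frac{\D}{\D t}(z+s)=1-\frac{2xy}{s}=1-\frac s2+\frac{u^2}{2s}\le 1+\frac{\abs{u_0}}{2}-\frac s2\]
shows that $z+s$ is strictly decreasing whenever $s>2+\abs{u_0}$; in the complementary region $s$ is bounded, so there it remains only to bound $z$ from above. This is the delicate point, where ``$z$ large'' competes with ``$x\wedge y$ small''. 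It is handled by noting that $s\le 2(x\wedge y)+\abs{u_0}$, so $x\wedge y$ small forces $s$ small, while
\[\frac{\D}{\D t}\log(x\wedge y)=\frac{x\vee y}{s}\Bigl(\frac{3(x\vee y)z}{s^2}-1\Bigr)\ge\frac12\Bigl(\frac{3z}{2s}-1\Bigr)\quad\text{once }z\ge\tfrac23 s,\]
so the smallest coordinate is driven upward at a definite rate as soon as $z$ exceeds $\tfrac23 s$; combining these facts (with $\dot z\le1$ controlling how fast $z$ can grow in the meantime) through a short case analysis shows that $x\wedge y$ cannot tend to $0$ and $z$ cannot escape to $\infty$. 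Hence $K$ exists --- which also establishes the stability of (\ref{detforcedfriedmandef}) in the sense of Section \ref{prelisec}.

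The proof is then completed by a standard limit‑set argument. Boundedness makes the $\omega$‑limit set $\Omega$ of the trajectory nonempty, compact, connected and invariant; by (b) it is contained in $\set{x=y}$; and since on $\set{x=y}$ every forward orbit tends to $(\tfrac43,1,1)$, invariance of $\Omega$ forces $(\tfrac43,1,1)\in\Omega$. As this point is a sink, the whole trajectory converges to it. I expect step (a) --- the a priori upper bounds, with the interplay between large $z$ and small $x\wedge y$ --- to be the main obstacle; everything else is direct computation or routine dynamical‑systems reasoning.
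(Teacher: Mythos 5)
Your route is genuinely different from the paper's, and most of it is sound, but step (a) --- the a priori absorption into a compact set --- is only sketched and is exactly where the argument is incomplete. Your local analysis (the Jacobian being lower triangular with eigenvalues $-\tfrac34,-\tfrac12,-\tfrac12$) is correct and more explicit than the paper's, and you correctly isolate the key structural fact that $|x-y|$ is non-increasing. But from there the two proofs diverge. The paper changes to the variables $z$, $\rho=x+y$, $\beta=\bigl(\tfrac{x-y}{x+y}\bigr)^2$, writes the system as (\ref{z-eq})--(\ref{beta-eq}), and then never needs any upper bounds: it establishes only lower bounds on $z$ and $\rho$ (step $1^\circ$), uses the monotone nonnegative function $\beta\rho^2=(x-y)^2$ to get the integral bound (\ref{inttozero}), rules out $\beta_t\to 1$ by a contradiction via Lemma \ref{odelemma}, concludes $\beta_t\to0$, and finally applies Lemma \ref{odelemma} again to (\ref{z-eq})--(\ref{rho-eq}) to get convergence of $z$ and $\rho$. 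The potential unboundedness that worries you never has to be confronted directly, because Lemma \ref{odelemma} already handles the case where the coefficients are unbounded.

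Your plan instead is: prove eventual absorption into a compact $K\subset(0,\infty)^3$, then use $|u|$-decay plus an $\omega$-limit-set argument. This is a legitimate strategy, and if (a) held it would finish cleanly. But (a) as stated is not proved, and the sketch has real gaps. The barrier computation $\tfrac{\D}{\D t}(z+s)\le 1+\tfrac{|u_0|}{2}-\tfrac{s}{2}$ is correct but only controls $z+s$ where $s$ is large; it leaves $z$ completely unconstrained in the slab $\{s\le 2+|u_0|\}$, which is precisely the region in which the system can try to push $z$ up while keeping $x\wedge y$ small. The promised ``short case analysis'' has to show that $z$ cannot creep to infinity through repeated excursions into this slab and, more importantly for step (b), that $x\wedge y$ is eventually bounded \emph{away} from zero --- ``cannot tend to zero'' is not enough, because if $\liminf (x\wedge y)=0$ then $\tfrac{3xyz}{(x+y)^3}$ has no uniform positive lower bound along the trajectory and your exponential decay estimate for $u_t^2$ fails. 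The interplay you describe (large $z$ in the thin slab forces $\log(x\wedge y)$ to grow rapidly, hence forces $s$ out of the slab, while $\dot z\le1$ limits the growth of $z$ in the meantime) is the right intuition, but converting it into a trapping region requires a careful quantitative bookkeeping that is not present. Until that is done, global asymptotic stability has not been established by your route. The paper's use of Lemma \ref{odelemma} is precisely the device that lets one sidestep this bookkeeping; if you want to keep your compactness-plus-$\omega$-limit framework, you should either supply the missing quantitative trapping argument or replace step (a) by the paper's integral/Lemma~\ref{odelemma} mechanism.
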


\begin{proof}
The uniqueness and local character of the equilibrium are found by
easy computations. It remains to prove the global stability. Since
$\lambda$ is a pure scaling parameter, we can choose $\lambda=1$. Let
us change to the variables
$$
z,\quad\rho=x+y,\quad\beta=\left(\frac{x-y}{x+y}\right)^2,
$$
which lose the differentiation between $x$ and $y$ in favour of a
single imbalance characteristic $\beta$. The new system satisfies the
equations
\begin{eqnarray}
\label{z-eq}
\dot{z}&=&1-\frac34(1-\beta)z\\
\label{rho-eq}
\dot\rho&=&\frac14(1-\beta)(3z-2\rho)\\
\label{beta-eq}
\dot\beta&=&-\beta(1-\beta)\left(\frac{3z}{\rho}-1\right).
\end{eqnarray}
Fix arbitrary initial values $z_0>0$, $\rho_0>0$, $\beta_0\in(0,1)$
($\beta_0=0$ gives a simple linear system).

$1^{\mathrm{o}}$. Note first that the system $(z_t,\rho_t,\beta_t)$
cannot escape from the set $(0,\infty)\times(0,\infty)\times[0,1)$ in
finite time. Denote $\tau=\inf\eset{t\ge0}{\beta_t=1}$. For
$t\in[0,\tau)$, $\dot{z}_t>0$ whenever $z_t<4/3$, and $\dot{\rho}_t>0$
whenever $\rho_t<(3/2)z_t$. Since the system freezes at $\tau$,
it follows that $\inf_{t\ge0}z_t\ge\min(z_0,4/3)$ and
$\inf_{t\ge0}\rho_t>0$. Equation (\ref{beta-eq}) then yields that
$\beta$ cannot reach 1 at any finite time, so $\tau=\infty$. 

$2^{\mathrm{o}}$. A basic observation from the equations
(\ref{xyz-eqs}) is that the absolute value of $x-y$ is
non-increasing. In terms of the variables $z,\rho,\beta$ this means
that $\beta\rho^2$ is non-increasing, and the corresponding equation
from which this is seen reads
\begin{equation}
\label{betarho2down}
\frac{\mathrm{d}}{\mathrm{d}t}(\beta_t\rho^2_t)=-\frac{3}{2}\beta_t(1-\beta_t)z_t.
\end{equation}
By point $1^{\mathrm{o}}$, we get 
\begin{equation}
\label{inttozero}
\lim_{t\to\infty}\int_t^\infty\beta_s(1-\beta_s)z_s\D{s}=0.
\end{equation}
Assume that $\beta_t\to1$. Since $\beta_t\rho_t^2$ is decreasing by
(\ref{betarho2down}), $\rho_t$ cannot be ultimately increasing, and it
approaches some finite limit value. On the other hand, Lemma
\ref{odelemma}, applied to (\ref{z-eq}), yields with $\beta_t\to1$
that $z_t\to\infty$, and (\ref{rho-eq}) makes $\rho_t$ ultimately
increasing. This contradiction shows that
$w:=1-\lim\inf_{t\to\infty}\beta_t>0$.

Since we already saw that ultimately $z_t\ge1$, we obtain by
(\ref{inttozero}) and (\ref{beta-eq}) (neglecting the term $3z/\rho$
that for any $\epsilon>0$ there exists a number $t_0$ such that
\begin{equation}
\label{epsineq-ff}
\epsilon>\int_t^\infty\beta_s(1-\beta_s)z_s\D{s}
\ge\int_t^T\dot\beta_s\D{s}=\beta_T-\beta_t
\end{equation}
for any $t$ and $T$ such that $T>t>t_0$. Choosing $\epsilon=w/2$ we
deduce that $\lim\sup_{t\to\infty}\beta_t<1$. But then,
(\ref{inttozero}) and (\ref{epsineq-ff}) imply
$\lim_{t\to\infty}\beta_t=0$. 

$3^{\mathrm{o}}$. Now, Lemma \ref{odelemma} yields that $z_t$ and $\rho_t$
converge to their stable values.
\end{proof}

In simulations, this scheme seems to work in a perfect way, and it is
also capable to deal with flash crowds and rapid input rate variations
as well (say, when $\lambda$ is large for a while and then suddenly
drops). We also have good intuitive grounds to believe that the
stochastic Enforced Friedman system be stable: as noted earlier, the
output from the waiting room is pure Friedman input to the subsystem
$(X,Y)$, just with a randomly varying rate. Since the Friedman system
is provably stable with any input rate $\lambda$, the same can be
expected from the Enforced Friedman system:

\begin{conjecture}
The (stochastic) Enforced Friedman system is stable with any input
rate $\lambda$.  
\end{conjecture}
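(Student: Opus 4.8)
The plan is to establish positive recurrence by the Foster criterion recalled in Section~\ref{prelisec}: the chain on $\Nat^3$ is irreducible with finite transition neighbourhoods, so it suffices to produce a Lyapunov function $V\ge0$, a finite set $C$ and an $\epsilon>0$ with $\mathcal{L}V\le-\epsilon$ off $C$, $\mathcal{L}$ being the generator. Abbreviate
\[\mu=\mu(X,Y,Z)=\frac{3Z(X+\tfrac12)(Y+\tfrac12)}{(X+Y+1)^2},\quad d_X=\frac{X(Y+1)}{X+Y+1},\quad d_Y=\frac{(X+1)Y}{X+Y+1},\]
so that $\mu$ is the total rate at which the waiting room feeds $(X,Y)$, its two outputs entering $X$ and $Y$ with the Friedman probabilities $(Y+\tfrac12)/(X+Y+1)$ and $(X+\tfrac12)/(X+Y+1)$. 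Reusing the compensator estimate from the proof of Proposition~\ref{stochfriedmanprop} with $\lambda$ replaced by $\mu$, and computing the other three lines directly, one obtains
\[\mathcal{L}(X^2+Y^2)\le\tfrac12\bigl(3\mu+\tfrac12-X\vee Y\bigr),\qquad \mathcal{L}(Z^2)=2Z(\lambda-\mu)+\lambda+\mu,\]
\[\mathcal{L}(X-Y)=-\frac{(\mu+1)(X-Y)}{X+Y+1},\qquad \mathcal{L}\bigl((X-Y)^2\bigr)=-\frac{2(\mu+1)(X-Y)^2}{X+Y+1}+\mu+d_X+d_Y;\]
the third of these is the stochastic shadow of the fact, used in proving Proposition~\ref{detforcedfriedmanprop}, that $\abs{x-y}$ does not increase along the large-system limit.

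I would try $V=x^2+y^2+\alpha z^2+\gamma(x-y)^2$ with $\alpha$ a fixed constant (a value slightly above $\tfrac{3}{16}$ suffices, e.g.\ $\tfrac15$) and $\gamma$ a sufficiently large constant multiple of $\lambda$. Writing $\rho=X+Y$, the part $x^2+y^2+\alpha z^2$ already works whenever $X\wedge Y$ is large (in particular in the ``balanced'' range $X\wedge Y\ge c(X\vee Y)$): there $\mu$ is of order $Z$, and the leading part of $\mathcal{L}(x^2+y^2+\alpha z^2)$ is a quadratic form in $(Z,\rho)$ that, for $\alpha$ above its threshold, is negative definite uniformly in the ratio $X\wedge Y/(X\vee Y)$ --- this is just the global asymptotic stability of Proposition~\ref{detforcedfriedmanprop}, now read at the level of drifts. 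What this part cannot handle when $\lambda$ is large is the ``imbalanced'' range in which $X\wedge Y$ stays bounded and $X\vee Y$ is large: there $\mu$ is small, the $-Z^2$ contribution to $\mathcal{L}(Z^2)$ degenerates, and choosing $Z\asymp\lambda(X\vee Y)/(X\wedge Y)$ one gets $\mathcal{L}(x^2+y^2+\alpha z^2)$ of order $\alpha\lambda^2(X\vee Y)$ and positive, for every $\alpha$ once $\lambda$ is large enough. This is exactly where the fourth formula above is used: in that range $(X-Y)^2\asymp(X\vee Y)^2$ while $X+Y+1\asymp X\vee Y$, so $\mathcal{L}((X-Y)^2)\asymp-(\lambda+2)(X\vee Y)$ and the term $\gamma\mathcal{L}((X-Y)^2)$ dominates the offending $\alpha\lambda^2(X\vee Y)$ provided $\gamma$ is a large enough multiple of $\lambda$; since $\mathcal{L}((X-Y)^2)\le\mu+d_X+d_Y=O(Z+\rho)$ everywhere, the correction $\gamma(x-y)^2$ never exceeds lower order elsewhere and so cannot spoil the balanced estimate. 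Finally, for $Z$ very large the term $\alpha z^2$ carries the day on its own, since then $\lambda-\mu<0$ makes $\mathcal{L}(Z^2)$ strongly negative.

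What remains is the verification that $\mathcal{L}V\le-\epsilon$ off a suitable finite set --- whose size will grow with $\lambda$ --- through a case analysis parametrised by the imbalance $X\wedge Y/(X\vee Y)$ and by the size of $Z$ relative to $X\vee Y$ and to $(X\vee Y)^2$. The delicate point, and what I expect to be the real obstacle (and the reason for the conjectural status), is to make the two estimates meet at the interface between the ``$X\wedge Y$ large'' regime, governed by the negative-definite quadratic coming from $x^2+y^2+\alpha z^2$, and the ``$X\wedge Y$ bounded'' regime, where one must lean on the $-\gamma(\mu+1)(X-Y)^2/(X+Y+1)$ term, and to fix the exact admissible dependence on $\lambda$ of $\gamma$ and of the exceptional set. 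That this cannot be outsourced to a standard fluid-limit stability theorem is itself the clue: in an imbalanced-and-large configuration $X$ drains only at rate $\approx X\wedge Y+1$, so the configuration persists for a time of order $X\vee Y$, over which a comparable number of genuine arrivals occurs and is re-routed into the rare chunk; but on the faster scale governing the relaxation of a balanced large population the arrivals are negligible, and the relevant fluid limit is the input-free drainage $\dot z=-3zxy/(x+y)^2$, $\dot x=3xy^2z/(x+y)^3-xy/(x+y)$, $\dot y$ symmetric, which freezes at states with $y=0$, $x,z>0$ and never reaches the origin. Restoration of balance is therefore a sub-fluid-order effect registered only by the drift computation above, and controlling it quantitatively in the imbalanced regime is the crux of the argument.
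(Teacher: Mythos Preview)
The statement is an open \emph{conjecture} in the paper; no proof is given there, only the heuristic that the waiting room delivers genuine Friedman input to $(X,Y)$ (with the same proportions as in Figure~\ref{stochfriedmandef}) at a randomly varying rate, together with the observation that the Friedman system is stable for every constant rate. So there is no proof to compare against, and your proposal already goes well beyond what the paper attempts.

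Your generator identities are all correct, and your explanation of why a fluid-limit argument does not apply directly (the $\lambda=0$ fluid freezes on the axes) is accurate. The specific Lyapunov function you propose, however, runs into the interface difficulty you anticipate in a more concrete way than your sketch suggests. Write $X=M$, $Y=\epsilon M$ with $\epsilon\in(0,1]$ and $\epsilon M$ large. Using the exact compensator of $X^2+Y^2$ from the proof of Proposition~\ref{stochfriedmanprop} (not the crude bound $\tfrac12(3\mu+\tfrac12-M)$), the leading quadratic part of $\mathcal L V$ in $(M,Z)$ works out to
\[
-2\epsilon\Bigl[M^2-\frac{3\bigl(2\epsilon-\gamma(1-\epsilon)^2\bigr)}{(1+\epsilon)^3}\,MZ+\frac{3\alpha}{(1+\epsilon)^2}\,Z^2\Bigr],
\]
which is negative definite for every $\epsilon\in(0,1]$ only when both $\alpha>3/16$ (your threshold, from $\epsilon=1$) and $\gamma^2<4\alpha/3$ (from $\epsilon\to0$). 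On the other hand your own analysis of the extreme case $Y=0$ forces $\gamma\ge 2\alpha\lambda/3$. These two constraints on $\gamma$ are compatible only when $\alpha<3/\lambda^2$, contradicting $\alpha>3/16$ once $\lambda>4$. Thus for large $\lambda$ there is always a cone of states---small but macroscopic imbalance, $Z$ along the bad eigendirection of the form above---on which $\mathcal L V$ is positive and of order $\epsilon M^2$. Your claim that the $(X-Y)^2$ correction ``never exceeds lower order elsewhere'' breaks down precisely here: its $-2\mu(X-Y)^2/(X+Y+1)$ piece is itself a quadratic cross term in $(M,Z)$ and alters the signature of the leading form, not just its lower-order part. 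Some genuinely different ingredient---a non-quadratic auxiliary in place of $(X-Y)^2$, or letting $\alpha$ depend on $\lambda$ together with a separate treatment of the boundary strip $\{X\wedge Y\le K(\lambda)\}$---would be needed. The paper does not attempt any of this, and leaves the statement as a conjecture.
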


\section{Concluding remarks}\label{conclsec}

Although we could so far prove only some of the
stability/non-stability properties that seem to hold for our models,
some preliminary conclusions can already be drawn. First, we showed
that it is possible to create effectively stabilizing 'Friedman input'
using a fully distributed algorithm based on random contacts. Second,
one should use a 'waiting room' concept with memoryless re-trials
rather than launch an active hunt for a rare chunk --- the rarity may
dissolve during the hunt, leading to oscillations. Third, the extra
delay caused by the waiting room stage is relatively small in the
mean, certainly worth its price. We also believe that the idea of
Enforced Friedman input has usable counterparts in many-chunk systems,
and we work currently on one promising candidate.

On the other hand, note that our models lack realism as regards the
modelling of two aspects of the system: search and
downloading. Following \cite{massoulievojnovic05}, the slowness of
downloading is modelled by the Exp(1)-distributed inter-contact times,
whereas the contacts themselves are assumed to be instant. However,
unsuccessful contacts could be repeated much faster than those leading
to a chunk download. Assessing the significance of this simplification
remains for further work.

\bigskip

{\bf Acknowledgement.} We thank Balakrishna Prabhu, Rudesindo Nunez
Queija, Philippe Robert, Florian Simatos (members of the Euro-FGI
SCALP project) and Lasse Leskel\"a for fruitful discussions and
insights.


\appendix

\section{Appendix}\label{auxsec}
\begin{proof} of Lemma \ref{odelemma}.

Denote $\tau=\inf\eset{t>0}{u_t=0}$. Since $\dot{u}_0=b_0>0$
in the case that $u_0=0$, we always have $\tau>0$. For $t\in[0,\tau)$ we have
$$
\dot{u}_t\ge\inf_{s\in[0,\tau)}b_s-(\sup_{s\in[0,\tau)}a_s)u_t. 
$$
It follows that
$$
u_t\ge\frac{\inf_{s\in[0,\tau)}b_s}{\sup_{s\in[0,\tau)}a_s}
+(u_0-\frac{\inf_{s\in[0,\tau)}b_s}{\sup_{s\in[0,\tau)}a_s})
\exp(-(\sup_{s\in[0,\tau)}a_s)t),\quad t\in[0,\tau).
$$
Since the right hand side is positive also for every $t=\tau$, we deduce that
$\tau=\infty$. 

Assume first that the functions $a$ and $b$ are bounded away from zero
and infinity. Let $a_*,a^*,b_*,b^*$ be any
positive numbers such that
$$
a_*<\liminf_{t\to\infty}a_t\le\limsup_{t\to\infty}a_t<a^*,\quad
b_*<\liminf_{t\to\infty}b_t\le\limsup_{t\to\infty}b_t<b^*.
$$
Denote
$$
\sigma=\inf\eset{t\ge0}{\inf_{s\in[t,\infty)}a_s\ge a_*,\
\sup_{s\in[t,\infty)}a_s\le a^*,\
\inf_{s\in[t,\infty)}b_s\ge b_*,\
\sup_{s\in[t,\infty)}b_s\le b^*}.
$$ 
Since $\sigma$ is finite and
$$
\inf_{s\in[\sigma,\infty)}b_s-(\sup_{s\in[\sigma,\infty)}a_s)u_{\sigma+t}
\le\dot{u}_{\sigma+t}
\le\sup_{s\in[\sigma,\infty)}b_s-(\inf_{s\in[\sigma,\infty)}a_s)u_{\sigma+t},
\quad t\ge0,
$$
we obtain that
$$
\frac{b_*}{a^*}+(u_\sigma-\frac{b_*}{a^*})
e^{-a^*t}
\le u_{\sigma+t}
\le\frac{b^*}{a_*}+(u_\sigma-\frac{b^*}{a_*})
e^{-a_*t},\quad t\ge0,
$$
which implies
$$
\frac{b_*}{a^*}
\le\liminf_{t\to\infty} u_t\le\limsup_{t\to\infty} u_t
\le\frac{b^*}{a_*}.
$$
Since $a_*,a^*,b_*,b^*$ were
arbitrary, the assertion follows.

Finally, when $a$ and $b$ are not both bounded away from zero and
infinity and the fractions in the assertion are well-defined, the
non-trivial cases are obtained as above, simply relaxing some
conditions in the definition of $\sigma$.
\end{proof}

\end{document}